\newcommand{\Mod}[1]{\ (\mathrm{mod}\ #1)}
\newcommand{\sos}{\text{\sc{SoS}}}
\newcommand{\Cc}{\mathcal{C}}
\newcommand{\Ideal}[1]{{\textbf{I}}\left( #1 \right)}
\newcommand{\GIdeal}[1]{\left\langle #1 \right\rangle}
\newcommand{\CSP}{\textsc{CSP}}
\newcommand{\IMP}{\textsc{IMP}}
\newcommand{\CPC}{\textsc{CPC}}
\newcommand{\CF}{\textsc{CF}}
\newcommand{\lex}{\textsf{lex }}
\newcommand{\grlex}{\textsf{grlex }}
\newcommand{\Variety}[1]{{\textbf{V}}\left( #1 \right)}
\newcommand{\spn}[1]{\left\langle #1 \right\rangle}
\newcommand{\I}{\emph{\texttt{I}}}
\newcommand{\Zz}{\mathbb{Z}}
\newcommand{\multideg}{\textnormal{multideg}}
\newcommand{\LM}{\textnormal{LM}}
\newcommand{\LT}{\textnormal{LT}}
\newcommand{\LC}{\textnormal{LC}}
\newcommand{\LCM}{\textnormal{lcm}}
\newcommand{\GB}{\text{Gr\"{o}bner} }
\newcommand{\reduce}[2]{{#1}|_{#2}}
\newcommand{\Field}{\mathbb{F}}
\newcommand{\Real}{\mathbb{R}}
\newcommand{\ca}[1]{\mathcal{#1}}
\newtheorem*{theorem*}{Theorem}
\providecommand{\keywords}[1]
{
  \small	
  \textbf{\textit{Keywords---}} #1
}
\providecommand{\subjectclass}[1]
{
  \small	
  \textbf{2010 MSC}\quad #1
}
\newtheorem{theorem}{Theorem}[section]
\newtheorem{lemma}[theorem]{Lemma}
\newtheorem{definition}[theorem]{Definition}
\newtheorem{proposition}[theorem]{Proposition}
\newtheorem{corollary}[theorem]{Corollary}
\newtheorem{example}[theorem]{Example}
\newtheorem{remark}[theorem]{Remark}
\title{Ideal Membership Problem for Boolean Minority and Dual Discriminator\footnote{This is the full version of the paper \cite{bharathi_et_al:MFCS2021} that appeared in the proceedings of the 46th International Symposium on Mathematical Foundations of Computer Science (MFCS 2021)
}}
\author{Arpitha P. Bharathi~\thanks{{arpitha.bharathi@chalmers.se}. Department of Mathematical Sciences, Chalmers University of Technology and the University of Gothenburg, SE-412 96 Gothenburg, Sweden.} \and Monaldo Mastrolilli~\thanks{{monaldo.mastrolilli@supsi.ch}. SUPSI - IDSIA (Istituto Dalle Molle di Studi sull'Intelligenza Artificiale), CH-6962 Lugano-Viganello, Switzerland.}}
\begin{document}
\date{}

\maketitle
\begin{abstract}
We consider the polynomial Ideal Membership Problem (IMP) for ideals encoding combinatorial problems that are instances of CSPs over a finite language. In this paper, the input polynomial $f$ has degree at most $d=O(1)$ (we call this problem IMP$_d$).  
We bridge the gap in \cite{MonaldoMastrolilli2019} by proving that the IMP$_d$ for Boolean combinatorial ideals whose constraints are closed under the minority polymorphism can be solved in polynomial time. 
This  completes the identification of the tractability for the Boolean $\IMP_d$. 
We also prove that the proof of membership for the $\IMP_d$ for problems constrained by the dual discriminator polymorphism over any finite domain can be found in polynomial time. 
Our results can be used in applications such as Nullstellensatz and Sum-of-Squares proofs.
\end{abstract}
\keywords{Polynomial ideal membership, Polymorphisms, \GB basis theory, Constraint satisfaction problems}\\
\subjectclass{13P10, 13F20}

\section{Introduction}
The study of polynomial ideals and related algorithmic problems goes back to David Hilbert~\cite{Hilbert1893}. The methods developed in this area to date find a wide range of applications in mathematics and computer science.
In this paper we consider the polynomial Ideal Membership Problem, where we want to decide if a given polynomial belongs to a given ideal. 
This is a fundamental algorithmic problem with important applications in solving polynomial systems (see e.g.~\cite{Cox}), polynomial identity testing \cite{Cox,PIT} and underlies proof systems such as Nullstellensatz and Polynomial Calculus (see e.g. \cite{BeameIKPP94,BussP98,Grigoriev98}).

To introduce the problem formally, let $\Field[x_1, \ldots, x_n]$ be the ring of polynomials over a field $\Field$ with indeterminates $x_1,\ldots, x_n$. 
A polynomial \emph{ideal} $I$ is a subset of the polynomial ring $ \mathbb{F}[x_1,\dots,x_n]$ with two properties: for any two polynomials $f,g$ in $I$, $f+g$ also belongs to $I$ and so does $hf$ for any polynomial $h\in \mathbb{F}[x_1,\dots,x_n]$. 
By the Hilbert Basis Theorem \cite{HilbertBasisTheorem} every ideal $I$ has a finite generating set $F=\{f_1,\ldots,f_r\}\subset I$ such that for every $f\in \Field[x_1, \ldots, x_n]$, we have $f\in I$ if and only if there is an ``ideal membership proof'', namely a set of polynomials $\{h_1,\ldots,h_r\} \subset \Field[x_1, \ldots, x_n]$ such that $f=h_1f_1+\ldots+h_rf_r$. 
%
The polynomial \textsc{Ideal Membership Problem} ($\IMP$) is to find out if a polynomial $f$ belongs to an ideal $I$ or not, given a set $F$ of generators of the ideal (we use $\IMP_d$ to denote $\IMP$ when the input polynomial $f$ has degree at most $d=O(1)$). 
The $\IMP$ is, in general, notoriously intractable. The results of Mayr and Meyer show that it is EXPSPACE-complete~\cite{Mayr1989,MAYR1982305}.

Semidefinite programming (SDP) relaxations have been a powerful technique for approximation algorithm design ever since the celebrated Goemans and Williamson result of Max-Cut \cite{goemans1995improved}. With the aim to construct stronger and stronger SDP relaxations, the Sum-of-Squares ($\sos$) hierarchy has emerged as the most promising set of relaxations (see, e.g., \cite{Laurent2009}). However, we still do not know the answer to even very basic questions about its power. For example, we do not even know when $\sos$ is guaranteed to run in polynomial time. As recently observed by O’Donnell~\cite{odonnell2017}, a bounded degree $\sos$ proof does not necessarily imply its low bit complexity, showing that the often repeated claim that for any fixed degree $\sos$ runs in polynomial time, is far from true. O’Donnell raised the open problem to establish useful conditions under which “small” $\sos$ proofs can be guaranteed. With this aim, a first elegant sufficient condition is due to Raghavendra and Weitz \cite{raghavendra_weitz2017,Weitz:Phd}. For each instance $\mathcal{C}$ of a combinatorial problem, the set of polynomials that vanish at every point of the set of solutions of $\mathcal{C}$ is called the \emph{combinatorial ideal} of $\mathcal{C}$ and denoted by $I_\mathcal{C}$.
To satisfy Raghavendra and Weitz's criterion, it is necessary (but also sufficient) that the ideal membership problem $\IMP_d$ for each $I_\mathcal{C}$ is polynomial-time solvable and
that ideal membership proofs can be efficiently found too.\footnote{Note that answering whether a polynomial belongs to a certain ideal does not necessarily mean finding an ideal membership proof of that.}
So the tractability of the ideal membership proof ensures that $\sos$ runs in polynomial time for combinatorial problems. This is currently the only known general result that addresses the $\sos$ bit complexity issue. However, the $\IMP_d$ tractability criterion of Raghavendra and Weitz  suffers from a severe limitation, namely it is not clear which restrictions on combinatorial problems can guarantee an efficient computation of the $\IMP_d$ proofs for combinatorial ideals.

The Constraint Satisfaction Problem (CSP) provides a general framework for a
wide range of combinatorial problems, where we are given a set of variables and a set of constraints, and we have to decide whether the variables can be assigned values that satisfy the constraints.
There are useful connections between the $\IMP_d$ and the $\CSP$: for example a $\CSP$ instance $\Cc$ is unsatisfiable if and only if $1\in\I_\Cc$. 
It follows that the $\CSP$ is just the special case of the $\IMP_d$ with $d=0$ (see \cref{sect:idealCSP} for more details on Ideal-CSP correspondence). 
Restrictions on CSPs, called $\CSP(\Gamma)$, in which the type of constraints is limited to relations from a set $\Gamma$, have been successfully applied to study the computational complexity classification (and other algorithmic properties) of $\CSP$s (see \cite{2017dfu7} for an excellent survey). 

Motivated by the aforementioned issue of Raghavendra and Weitz criterion, Mastrolilli~\cite{MonaldoMastrolilli2019} initiated a systematic study of the $\IMP_d$ tractability for combinatorial ideals of the form $\IMP_d(\Gamma)$ arising from combinatorial problems from $\CSP(\Gamma)$ for a set of relations $\Gamma$ over the Boolean domain. 
The classic dichotomy result of Schaefer \cite{Schaefer78} gives the complexity of $\CSP(\Gamma)$ (and therefore of $\IMP_0(\Gamma)$) for the Boolean domain:
$\CSP(\Gamma)$ is solvable in polynomial time if all constraints are closed under one of six polymorphisms (majority, minority, MIN, MAX, constant 0 and constant 1), else it is NP-complete.
Mastrolilli \cite{MonaldoMastrolilli2019} claimed a dichotomy result for the $\IMP_d(\Gamma)$ for the Boolean domain: for any constant $d\geq 1$, the $\IMP_d(\Gamma)$ of Boolean combinatorial ideals is solvable in polynomial time if all constraints are closed under one of four polymorphisms (majority, minority, MIN, MAX), else it is coNP-complete.
In \cite{MonaldoMastrolilli2019}, it is shown that $\IMP_d(\Gamma)$ is polynomial time solvable for three polymorphisms (majority, MIN, MAX), and moreover ideal membership proofs can be efficiently found too.
Whereas for the ternary minority polymorphism it was incorrectly declared to have been resolved by a previous result\footnote{This was pointed out by Andrei Bulatov, Akbar Rafiey and Stanislav \v{Z}ivn\'{y}.}. 
It was mistakenly assumed in \cite{MonaldoMastrolilli2019} that computing the (mod 2) \GB basis in lexicographic order was sufficient to solve the $\IMP_d$ problem in polynomial time, but the issue is that we require polynomials to be over $\mathbb{Q}$ and not $\textrm{GF}(2)$. 

We address these issues in this paper and, therefore establish the full dichotomy result claimed in \cite{MonaldoMastrolilli2019} (see \cite{mastrolilli_talg21} for an updated version of the paper). To ensure efficiency of the $\IMP_d(\Gamma)$ for the ternary minority polymorphism, it is sufficient to compute a $d$-truncated \GB basis in the graded lexicographic order (see \cref{def:dTruncated GB,sect:background} for more details).
This is achieved by first showing that we can easily find a \GB basis in the lexicographic order for the combinatorial ideal. Since polynomials in this \GB basis can have degrees up to $n$, we show how this basis can be converted to a $d$-truncated \GB basis in the graded lexicographic order in polynomial time, whose polynomials have degrees up to $d$ and coefficients of constant size. This efficiently solves $\IMP_d(\Gamma)$ for combinatorial ideals whose constraints are over a language $\Gamma$ closed under the minority polymorphism. Together with the results in~\cite{MonaldoMastrolilli2019,mastrolilli_talg21}, our result allows us to identify the tractability of the Boolean $\IMP_d(\Gamma)$.
Thus the following summarizes our first result of this paper:

\begin{theorem}\label{thm:MainTheorem}
Let $\Gamma$ be a constraint language over the Boolean domain that is closed under the minority polymorphism. For each instance $\Cc$ of $\CSP(\Gamma)$, 
the $d$-truncated reduced \GB basis in the graded lexicographic monomial ordering of the combinatorial ideal $\I_\Cc$ can be computed in $n^{O(d)}$ time.
\end{theorem}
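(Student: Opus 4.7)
My plan follows the two-step strategy sketched in the paper's introduction: construct an easy (but potentially high-degree) lexicographic Gröbner basis first, then convert it to the $d$-truncated \grlexns{} basis. The key device is a change to character coordinates $\tilde x_i := 1 - 2x_i$ and $y_j := 1 - 2z_j$, which take values in $\{\pm 1\}$ on Boolean inputs and turn parity constraints into purely multiplicative ones.

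Since $\Gamma$ is closed under Boolean minority $m(a,b,c) = a+b+c \pmod 2$, every relation of $\Gamma$ is a coset of an $\Field_2$-linear subspace, and so the constraints of $\Cc$ form an affine $\Field_2$-system. I first run Gaussian elimination over $\Field_2$ in $O(n^3)$ time to split the variables into $r$ dependent $x_{i_1}, \ldots, x_{i_r}$ and $k = n-r$ free $z_1, \ldots, z_k$, with $x_{i_s} \equiv \sum_{j \in S_s} z_j + c_s \pmod 2$ for each $s$. The unique multilinear $\Rational$-representative of this Boolean function is $\phi_s(z) = \tfrac{1}{2}\bigl(1 - (-1)^{c_s}\prod_{j \in S_s}(1 - 2 z_j)\bigr)$, and Buchberger's criterion (the leading monomials $x_{i_s}$ and $z_j^2$ are pairwise coprime) shows that $\{x_{i_s} - \phi_s(z)\}_s \cup \{z_j^2 - z_j\}_j$ is a reduced lexicographic Gröbner basis of $I_\Cc$ for the order $x_{i_1} > \cdots > x_{i_r} > z_1 > \cdots > z_k$. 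This basis may contain elements of degree up to $k$, so it is not yet useful on its own; its role is to describe $I_\Cc$ concretely.

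The core of the proof is the conversion to the truncated \grlexns{} basis. In $(\tilde x, y)$ coordinates, the lex relation $x_{i_s} - \phi_s(z)$ becomes the binomial $\tilde x_{i_s} - (-1)^{c_s}\prod_{j \in S_s} y_j$, and the Boolean relations become $\tilde x_i^2 - 1$ and $y_j^2 - 1$. Consequently, every multilinear monomial in $(\tilde x, y)$ reduces in $O(dk)$ time to a canonical form $\epsilon \prod_{j \in T} y_j$ with $\epsilon \in \{\pm 1\}$ and $T \subseteq [k]$, and two monomials agree modulo $I_\Cc$ iff their canonical forms coincide. I enumerate the $\binom{n}{\le d} = O(n^d)$ multilinear monomials $m = \prod_{i \in A} x_i \prod_{j \in B} z_j$ of degree at most $d$ in $(x,z)$ in increasing \grlexns{} order. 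For each $m$, expanding $m = 2^{-|A|-|B|}\prod_{i \in A}(1 - \tilde x_i)\prod_{j \in B}(1 - y_j)$ and reducing each of the $\leq 2^d$ resulting $(\tilde x, y)$-monomials to canonical form yields a vector $\Phi(m)$ in the $\Rational$-span of canonical $y$-monomials. I maintain an incrementally row-reduced list $\{\Phi(s) : s \in \Ss\}$ indexed by the standard monomials found so far. For each new $m$: if $m$ is divisible by a previously emitted GB leading monomial, skip it; otherwise test whether $\Phi(m) = \sum_{s \in \Ss} \alpha_s \Phi(s)$ is solvable. If yes, emit $m - \sum_s \alpha_s s$ as a new reduced \grlexns{} GB element with leading monomial $m$ (since every $s \in \Ss$ is \grlexns-smaller than $m$); if no, append $m$ to $\Ss$ as a standard monomial.

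At most $|\Ss| \cdot 2^d = n^{O(d)}$ distinct canonical $y$-monomials occur throughout, so the incremental Gaussian elimination runs in $n^{O(d)}$ total time and produces GB elements of degree at most $d$. The main obstacle I anticipate is bounding the bit size of the coefficients $\alpha_s$: each entry of the working matrix lies in $2^{-d}\Int$ before row reduction, but row reduction could in principle blow up denominators. I would attack this by exploiting that the canonical $y$-monomials form a Fourier-style orthogonal basis on $V_\Cc$, so the relevant minors can be transformed into scaled $\pm 1$ Hadamard-like blocks, from which the $\alpha_s$ read off with dyadic denominators bounded by $2^d = O(1)$. A secondary check is that this single grlex-order pass really produces the reduced \grlexns{} Gröbner basis truncated at degree $d$: this follows from the combination of processing monomials in grlex order, the divisibility skip rule for non-minimal leading monomials, and the standard correspondence between minimally non-standard monomials and reduced GB leading monomials.
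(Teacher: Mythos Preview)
Your high-level strategy matches the paper's exactly: build the lexicographic \GB{} basis $G_1$ from the row-reduced $\Field_2$-system, then run an FGLM-style pass over monomials of degree $\le d$ in \grlexns{} order. Your change to $\pm 1$ coordinates is precisely the paper's device in disguise: the ``canonical form'' $\epsilon\prod_{j\in T} y_j$ of a $(\tilde x,y)$-monomial is the same data as what the paper calls a \emph{Boolean term} $\bigoplus_{j\in T} z_j$ (up to the sign $\epsilon$), and your expansion $m=2^{-|A|-|B|}\prod(1-\tilde x_i)\prod(1-y_j)$ is their equation \eqref{eq:mod2expansion} written multiplicatively. The substantive difference is in the dependence test. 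You propose full incremental Gaussian elimination on the vectors $\Phi(m)$; the paper observes (\cref{lem:C}) that for each new $q$ only the \emph{longest} Boolean term in the expansion of $\reduce{q}{G_1}$ can be new, because every shorter Boolean term is the longest term of some proper sub-monomial of $q$, which has already been processed and is necessarily standard. This collapses the linear-algebra step to a single hash-table lookup per monomial and, more importantly, gives the coefficient bound for free: every $\alpha_s$ produced is an explicit signed dyadic of absolute value $\le 2^{d-1}$, so your anticipated ``main obstacle'' simply does not arise. Your Fourier-orthogonality argument would eventually reach the same conclusion, but the paper's combinatorial shortcut is both simpler and tighter.

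Two small items to patch in your write-up. First, you enumerate only the $\binom{n}{\le d}$ \emph{multilinear} monomials; as stated this never emits the elements $x_i^2-x_i$ (and $z_j^2-z_j$), which belong to the truncated reduced basis for $d\ge 2$. Either pre-include them and then restrict to multilinear monomials (now justified, since every non-multilinear monomial is divisible by some $x_i^2$), or enumerate all monomials as in standard FGLM. Second, your running-time claim relies on the linear algebra staying in $n^{O(d)}$ bit operations; with generic Gaussian elimination this needs an argument (fraction-free elimination or the orthogonality you mention), whereas the paper's longest-term trick sidesteps the issue entirely.
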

\begin{corollary}\label{thm:Corollary_minority}
  If $\Gamma$ is closed under the minority polymorphism, then the ideal membership proofs for $\IMP_d(\Gamma)$ over the Boolean domain can be computed in polynomial time for $d=O(1)$.
\end{corollary}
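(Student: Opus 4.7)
The plan is to exploit the fact that Boolean relations closed under minority are exactly affine subspaces over $\mathrm{GF}(2)$, and to work in the multiplicative variables $y_i = 1 - 2x_i \in \{-1,+1\}$ on the Boolean cube, where each constraint $\bigoplus_{i \in T_j} x_i = b_j$ becomes the monomial equation $\prod_{i \in T_j} y_i = (-1)^{b_j}$. The combinatorial ideal, viewed in $\mathbb{Q}[y]$, is then generated by $\{y_i^2 - 1\}$ together with $y^{T_j} - c_j$ for $c_j = (-1)^{b_j} \in \{\pm 1\}$. Because each squarefree monomial $y^T$ expands to $\sum_{S \subseteq T}(-2)^{|S|} x^S$ with $x$-grlex leading term $(-2)^{|T|} x^T$, the $x$-grlex order on multilinear polynomials pulls back to the natural grlex order on squarefree $y$-monomials; it therefore suffices to compute the $d$-truncated reduced grlex GB on the $y$-side and translate back.

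I would first apply Gaussian elimination over $\mathrm{GF}(2)$ to the constraint support vectors to obtain a reduced-row-echelon basis of the $\mathrm{GF}(2)$-subspace $L \subseteq \{0,1\}^n$ they span, together with the character $\phi : L \to \{\pm 1\}$ extending the $c_j$'s (an ill-defined $\phi$ signals infeasibility, in which case the GB is $\{1\}$). In the quotient $\mathbb{Q}[y]/(I + \langle y_i^2 - 1\rangle)$, the identity $y^{T_1} = \phi(T_1 \triangle T_2)\, y^{T_2}$ holds precisely when $T_1 \triangle T_2 \in L$. Consequently, the standard monomials are exactly the squarefree $y^T$ for which $T$ is the grlex-minimum of its coset $T + L$, and each leading monomial $y^T$ reduces to $\phi(T \triangle T_R)\, y^{T_R}$ where $T_R$ is that coset minimum.

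To produce the $d$-truncated GB I would enumerate $L_{\leq 2d} := \{U \in L : |U| \leq 2d\}$ by testing each of the $O(n^{2d})$ subsets of weight at most $2d$ against the echelon basis of $L$. A key observation is that if $|T| \leq d$ then the grlex-minimum of $T + L$ has weight at most $|T| \leq d$, and any $U \in L$ with $|T \triangle U| \leq d$ must satisfy $|U| \leq |T| + d \leq 2d$; so a single scan of $\{T \triangle U : U \in L_{\leq 2d}\}$ both decides whether $T$ is a leading monomial and returns $T_R$. Filtering to the divisibility-minimal leading monomials $\mathcal{M}_d$ and adjoining $y_i^2$ for each $i$ where $y_i$ itself is standard yields the LMs of the reduced $d$-truncated GB; the corresponding polynomials $y^T - \phi(T \triangle T_R)\, y^{T_R}$ and $y_i^2 - 1$ are then expanded via $y_i = 1 - 2x_i$, producing $x$-polynomials of degree at most $d$ with $O(1)$-size rational coefficients, for a total runtime of $n^{O(d)}$.

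Correctness rests on the key lemma that if $T_R$ is the grlex-minimum of $T_R + L$, then every subset $S \subseteq T_R$ is itself the grlex-minimum of $S + L$. Indeed, a hypothetical witness $L' \in L \setminus \{0\}$ with $S + L' <_{\grlex} S$ would lift to $T_R + L' <_{\grlex} T_R$: if $|S \triangle L'| < |S|$ then $|T_R \cap L'| \geq |S \cap L'|$ forces $|T_R \triangle L'| < |T_R|$, while if $|S \triangle L'| = |S|$ and the lex-decrease happens at a position $p \in L'$, then $p \in S \subseteq T_R$ makes the same position cause a lex-decrease for $T_R$. This closure property guarantees that the $x$-expansion of every tail $\phi(T \triangle T_R)\, y^{T_R}$ uses only standard $x$-monomials, so the output is genuinely reduced. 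The principal obstacle I anticipate --- that finding the grlex-minimum of a coset is a minimum-weight coset-leader problem and $\mathrm{NP}$-hard in general --- is sidestepped precisely because the $d$-truncation restricts attention to cosets with a representative of weight $\leq d$, where the bound $|U| \leq 2d$ confines the search to $L_{\leq 2d}$, whose enumeration is polynomial of degree $O(d)$.
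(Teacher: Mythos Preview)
Your approach is correct and genuinely different from the paper's. The paper first writes down the reduced \lex Gr\"obner basis $G_1=\{x_i-M(f_i),\,x_j^2-x_j\}$ in the original $x$-variables, where $M(f_i)$ is the explicit $\mathbb{Q}$-arithmetic expansion of the XOR $f_i$, and then runs a tailored FGLM-style conversion that walks through monomials $q$ of degree at most $d$ in \grlex order, expands each $\reduce{q}{G_1}$ as a signed sum of ``Boolean terms'' via the identity $2fg=f+g-(f\oplus g)$, and detects linear dependences among these terms to decide whether $q$ is a new leading monomial or a standard one. You instead pass to the $\pm 1$ variables $y_i=1-2x_i$, turning the ideal into the binomial ideal $\langle y_i^2-1,\ y^{T_j}-c_j\rangle$; the $\mathrm{GF}(2)$-coset structure is then exposed directly, standard and leading monomials are characterised purely combinatorially (grlex-minimum of a coset of $L$ versus not), and every basis element has the closed form $y^T-\phi(T\triangle T_R)\,y^{T_R}$. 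The paper's route has the advantage of never leaving the $x$-variables, so no translation step is needed; yours gives a more transparent structural description that makes both the algorithm and its analysis shorter.

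Two places where your sketch should be tightened, though neither is fatal. The claim ``it therefore suffices to compute the $d$-truncated reduced \grlex GB on the $y$-side and translate back'' needs the extra observation that since $x_i^2-x_i\in I$, every $f\in I$ has a multilinear representative with the same squarefree $x$-leading monomial, so your triangularity remark actually forces the squarefree parts of $\langle\LT_x(I)\rangle$ and $\langle\LT_y(I)\rangle$ to coincide (and $x_i^2\leftrightarrow y_i^2$ handles the rest); without this the translated $y$-basis is not yet shown to be an $x$-basis. And in Case~2 of your key lemma, the lex-decrease at position $p$ alone does not give a \grlex decrease for $T_R$: you also need $|T_R\triangle L'|\le|T_R|$, which follows since $|S\triangle L'|=|S|$ forces $|L'|=2|S\cap L'|$ and $|T_R\cap L'|\ge|S\cap L'|$.
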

After the appearance of a preliminary version of this paper~\cite{bharathi2020ideal}, Bulatov and Rafiey~\cite{bulatov2020complexity} obtained exciting results.
For a finite domain $D=\{0,1,\dots,p-1\}$ with prime $p$ elements, they consider the affine polymorphism $\otimes:D^3\rightarrow D$ defined as $\otimes (a,b,c)= a - b + c \Mod{p}$ (this generalizes the Boolean minority operation). By building on our approach,  they prove that a $d$-truncated \GB basis can be computed in time $n^{O(d)}$ for any fixed prime $p$. They then generalized the result to IMPs over languages invariant under affine operations of arbitrary finite Abelian groups ~\cite{bulatov_abeliangroups}.

In \cite{bharathi_et_al:MFCS2020, Bharathi_etal_SIAM2022}, we began the generalization of $\CSP(\Gamma)$ (viz. $\IMP_0(\Gamma)$) by working on the corresponding $\IMP_d(\Gamma)$ for any $d=O(1)$ in the three-element domain, which expands the known set of tractable $\IMP_d$ cases by providing a suitable class of combinatorial problems. We considered problems constrained under the dual discriminator polymorphism and proved that we can find the reduced \GB basis of the corresponding combinatorial ideal in polynomial time. This ensures that we can check if any degree $d$ polynomial belongs to the combinatorial ideal or not in polynomial time, and provide proof of membership if it does.
Among the results obtained in \cite{bulatov2020complexity}, the authors show that the $\IMP_d$ is solvable in polynomial time for \emph{any} finite domain for problems constrained under the dual discriminator. 
This was done by eliminating permutation constraints in some sense through a preprocessing step and converting an instance $\Cc = (X,D,C)$ for $\CSP(\Gamma)$ (where $X$ is a set of variables and $C$  is a set of constraints over $\Gamma$ with variables from $X$, see \cref{def:csp} for notation) to an instance $\Cc'=(X',D,C')$ where $X'\subseteq X$ and $C'\subseteq C$. 
Moreover, a polynomial $f(X)$ was converted to a polynomial $f'(X')$ such that $f\in\I_\Cc$ if and only if $f'\in\I_{\Cc'}$. They calculated a \GB basis of $I_{\Cc'}$ in polynomial time, which considered the remaining constraints. This gives a proof of membership of $f'$ in $I_{\Cc'}$ if it does belong to the ideal, but it was not yet known as to how to recover the proof of membership for $f$ in $\I_\Cc$. Meanwhile, our results in \cite{bharathi_et_al:MFCS2020, Bharathi_etal_SIAM2022} give a proof of membership, but are only constrained to a 3-element domain. 

In this paper, we compute a \GB basis for the entire combinatorial ideal over a finite domain by showing that a \GB basis of the ideal associated with permutation constraints can also be calculated in polynomial time. We forego the pre-processing step of \cite{bulatov2020complexity}, include the permutation constraints and directly calculate a \GB basis of $\I_\Cc$. The set of polynomials that the elements of the \GB basis can come from is polynomial in size, and hence we show that a proof of membership can also be calculated in polynomial time as required in \cite{RaghavendraW17}.
The following summarizes the second result of this paper:
\begin{theorem}\label{thm:main_dualdisc}
Let $\Gamma$ be a constraint language over a finite domain $D$ that is closed under the dual discriminator polymorphism.
For each instance $\Cc$ of $\CSP(\Gamma)$, a \GB basis in the graded lexicographic monomial ordering of the combinatorial ideal $\I_\Cc$ can be computed in time polynomial in the number of variables. Polynomials in this basis have degrees at most $|D|!$.
\end{theorem}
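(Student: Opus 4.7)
The plan is to exploit the structural characterization of relations invariant under the dual discriminator polymorphism over $D$: every such relation decomposes into a conjunction of unary constraints on individual variables and binary constraints of two kinds, namely rectangles $A \times B$ (which are themselves conjunctions of unary constraints) and partial bijections $\pi_{ij}$ between allowed values of two variables $x_i, x_j$. Rewriting $\Cc$ in this canonical form is a polynomial-time preprocessing step, and it is the starting point for everything that follows.

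Working with the cleaned-up instance, I would first run an arc-consistency sweep to compute, for each variable $x_i$, the tightest feasible set $D_i \subseteq D$. If any $D_i$ becomes empty then $\I_\Cc = \langle 1 \rangle$ and the claim is trivial. Next, I would view the binary permutation constraints as a labeled graph on the variables, choose a spanning forest with a root $x_r$ in each connected component, and, by Lagrange-interpolating the composition of partial bijections along each tree path, encode every non-root variable $x_j$ by a substitution polynomial $x_j - p_{rj}(x_r)$ of degree at most $|D|-1$. Each non-tree edge closes a cycle that imposes a self-map $\sigma$ of $D_r$ whose fixed-point equation $\sigma(x_r) = x_r$ further restricts $D_r$; all such restrictions can be folded into an updated root-domain polynomial $\prod_{a \in D_r}(x_r - a)$.

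The proposed generating set of $\I_\Cc$ is then the collection of root-domain polynomials together with the substitution polynomials $x_j - p_{rj}(x_r)$. I would verify that this set is a \GB basis in the grlex order (with ties broken so that root variables are smallest) via Buchberger's criterion: substitutions in different components share no variables, so their $S$-polynomials vanish; substitutions within the same component share only the root, and their $S$-polynomials reduce through univariate arithmetic in $\mathbb{Q}[x_r]$ modulo the root-domain polynomial, which caps all intermediate degrees at $|D|-1$. The stated $|D|!$ bound on basis degrees is a safe upper bound accounting for the worst-case elements of the symmetric group $S_{|D|}$ that can arise from closure under composition of the permutations labeling cycles.

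The principal obstacle is the \GB basis verification itself, particularly ensuring that all $S$-polynomial reductions terminate in polynomial time with polynomially bounded coefficient bit-size, and that cycles can be compiled into the root-domain polynomials without blow-up. The key enabler is that every nontrivial reduction collapses into a univariate operation in $\mathbb{Q}[x_r]$ modulo the current root-domain polynomial, keeping degrees and bit-sizes under control throughout; this is what ultimately turns a correctness argument into a polynomial-time algorithm and what distinguishes the present approach from the preprocessing-based method of \cite{bulatov2020complexity}, which forgoes direct handling of permutation constraints.
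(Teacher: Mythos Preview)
Your structural characterization of $\nabla$-closed relations is incomplete, and this is a genuine gap. Relations preserved by the dual discriminator over a finite domain decompose into binary constraints of \emph{three} kinds, not two: permutation constraints, complete (rectangle) constraints, and \emph{two-fan} constraints $R(x_i,x_j)$ with $R=(\{a\}\times D_j)\cup(D_i\times\{b\})$. A two-fan is neither a rectangle nor a partial bijection, and its defining polynomial $(x_i-a)(x_j-b)$ is genuinely bivariate of degree two. Your proposed generating set consists only of univariate root-domain polynomials and substitution polynomials $x_j-p_{rj}(x_r)$; it contains no polynomial that can cut out a two-fan, so it does not generate $\I_\Cc$ in general. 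The paper handles this by treating the complete and two-fan constraints as a separate ideal $I_{\CF}$, showing its reduced \GB basis lives in a fixed $O(n^2)$-sized set $\ca{D}\cup\ca{F}\cup\ca{L}$, and then carefully analysing the $S$-polynomials that arise when this basis is combined with the bases $\ca{G}_i$ of the permutation-constraint ideals (Case~2 in the proof of \cref{thm:majority_maintheorem} is exactly where two-fans interact with the CPCs and force the addition of further polynomials of shape $(x_k-\sigma_{pk}(a))(x_l-\sigma_{ql}(b))$).

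There is also a secondary problem with your treatment of the permutation part. In \grlex order the leading monomial of $x_j-p_{rj}(x_r)$ is $x_r^{\deg p_{rj}}$, not $x_j$, whenever $\deg p_{rj}\geq 2$ (which happens as soon as $|D|\geq 3$). Hence substitution polynomials within the same component do \emph{not} have relatively prime leading monomials, and your claim that their $S$-polynomials vanish trivially is false. The paper sidesteps this by first splitting off all pairs $x_j,x_k$ whose columns in the CPC relation coincide via the degree-one polynomials $x_j-x_k$ (the set $\ca{S}_i$), leaving at most $|D|!$ variables per component on which Buchberger can be run in $O(1)$ time; this is also where the $|D|!$ degree bound actually comes from, not from ``closure under composition'' in $S_{|D|}$ as you suggest.
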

\begin{corollary}\label{thm:corollary_dualdisc}
  If $\Gamma$ is closed under the dual discriminator polymorphism, then membership proofs for $\IMP_d(\Gamma)$ over a finite domain can be computed in polynomial time.
\end{corollary}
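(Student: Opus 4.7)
The strategy is to exploit the classical structural description of relations invariant under the dual discriminator over a finite domain $D$: every such relation decomposes into a conjunction of unary relations and binary relations that are either Cartesian products $A\times B$ or graphs $\{(a,\pi(a)):a\in A\}$ of partial bijections $\pi$ between subsets of $D$. Because the dual discriminator is a $3$-ary near-unanimity operation, $\CSP(\Gamma)$ has strict width $2$, so path-consistency computed in polynomial time either certifies that $\Cc$ has no solutions (in which case we return $\{1\}$) or refines $\Cc$ to an equivalent instance in which each variable has a tightened domain $D_i\subseteq D$, each product-type binary constraint has been absorbed into the unary part, and each remaining binary constraint is a tightened partial bijection $\pi_{ij}\colon D_i\to D_j$. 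This closure preserves the solution set, and hence preserves $\I_\Cc$.

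Next, I would build the ``permutation graph'' $H$ whose nodes are the variables and whose edges are the partial-bijection constraints, and process each connected component $T$ of $H$ independently. In $T$, pick a spanning tree rooted at a representative variable $x_r$; composing the partial bijections along the unique tree path from $r$ to any other variable $x_i\in T$ yields a partial bijection $\sigma_i\colon D_r\to D$. Every non-tree edge of $T$ produces a cyclic consistency condition, equivalent to a composite permutation in $\mathrm{Sym}(D)$ acting as the identity on certain elements of $D_r$; intersecting over all such conditions refines $D_r$ to a subset $D_r'$. Since all composites live in the constant-size group $\mathrm{Sym}(D)$, this refinement is carried out symbolically at the permutation level in polynomial time. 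The ideal of component $T$, in the variables $\{x_i:i\in T\}$, is then generated by
\[
p_T(x_r) \;=\; \prod_{a\in D_r'}(x_r-a),\qquad x_i-\sigma_i(x_r)\ \text{for } i\neq r,
\]
where each $\sigma_i(x_r)$ is the Lagrange interpolant of degree less than $|D_r'|$; this yields an immediate lex \GB basis with $x_i>x_r$. Converting each component's basis from lex to graded lex via the FGLM algorithm takes time polynomial in the quotient dimension $|D_r'|\leq |D|$, which is $O(1)$ per component. Because distinct components involve disjoint variables, the union of the component bases is a graded lex \GB basis of $\I_\Cc$, and the degree of every element is bounded by the socle bound $|D_r'|\leq |D|\leq |D|!$.

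The main obstacle is two-fold. First, one must justify that the generators built from the path-consistent, permutation-decomposed $\Cc$ actually define $\I_\Cc$ and not merely a sub-ideal: this follows because the variety of the generators equals the solution set of $\Cc$ by construction, each variable possesses a squarefree univariate polynomial in the ideal (either $p_T(x_r)$ itself or its pushforward $\prod_{a\in\sigma_i(D_r')}(x_i-a)$), so the Seidenberg criterion yields a radical zero-dimensional ideal, and the Nullstellensatz then forces equality with $\I_\Cc$. Second, one must keep all coefficients of polynomial bit size despite potentially composing $\Theta(n)$ partial bijections when building each $\sigma_i$: the remedy is to perform every composition purely in $\mathrm{Sym}(D)$ and to invoke Lagrange interpolation exactly once per variable at the very end, yielding coefficients of constant size. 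Together these considerations give the $n^{O(1)}$ overall running time claimed, and Corollary~\ref{thm:corollary_dualdisc} follows by tracking the cofactors through the polynomially many reduction steps performed against this \GB basis on any input polynomial of degree $d=O(1)$.
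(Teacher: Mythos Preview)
Your proposal has a genuine gap: the structural description of dual-discriminator-closed relations that you rely on is incomplete. Over a finite domain, the binary relations preserved by the dual discriminator are not only Cartesian products and graphs of partial bijections; they also include the \emph{two-fan} relations
\[
R \;=\; (\{a\}\times D_j)\;\cup\;(D_i\times\{b\}),\qquad a\in D_i,\ b\in D_j,
\]
and these do not reduce to either of your two types. A two-fan with $|D_i|,|D_j|\geq 2$ is already arc- and path-consistent, so your claim that after closure ``each remaining binary constraint is a tightened partial bijection'' is false. Consequently your connected-component argument, which hinges on every edge being a bijection so that all variables in a component are determined by a single representative $x_r$, breaks down: a two-fan edge does not let you write $x_j$ as a univariate function of $x_i$, and the quotient ring of a component containing two-fans need not have dimension bounded by $|D|$.

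The paper's proof treats exactly this missing piece. It separates the constraints into a permutation part (your ``components'', handled essentially as you describe, via chained permutation constraints with at most $|D|!$ distinct column patterns) and a complete/two-fan part $I_{\CF}$, and shows by an explicit $S$-polynomial case analysis that the reduced \GB basis of $I_{\CF}$ stays inside a finite template $\mathcal{D}\cup\mathcal{F}\cup\mathcal{L}$ of size $O(n^2)$. The delicate step is then the interaction between the two parts: a two-fan $(x_p-a)(x_q-b)$ with $x_p$ in one permutation component and $x_q$ in another forces the addition of the whole family $(x_k-\sigma_{pk}(a))(x_l-\sigma_{ql}(b))$ to $G$, and the paper argues that after saturating with these polynomials every $S$-polynomial reduces to zero. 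Your plan would need an analogue of this interaction analysis to be complete.
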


After the conference version of this paper \cite{bharathi_et_al:MFCS2021}, Bulatov and Rafiey have updated their results \cite{bulatov2022STOCupdated} to show that a $d$-truncated \GB basis can be obtained for several polymorphisms, including the dual discriminator. Although their approach is more versatile it does not show that a full \GB basis can be obtained in $n^{O(|D|!)}$ time and has polynomial size independent of $d$.

The study of CSP-related $\IMP$s is in its early stages. The results obtained in this paper are steps towards the long-term and challenging goal of extending the celebrated dichotomy results of $\CSP(\Gamma)$ for finite domain \cite{Bulatov17,Zhuk17} to $\IMP_d(\Gamma)$. This would provide a complete CSP-related characterization of when the IMP tractability criterion is applicable.

To make the paper self-contained, some essential background and standard (according to the book \cite{Cox}) \GB basis notations can be found in the following section.

\section{Preliminaries}\label{sec:preliminaries}
\subsection{Ideals and varieties}\label{sect:background}
Let $\Field$ denote an arbitrary field (for the applications of this paper $\Field=\mathbb{Q}$). Let $\Field[X]$ be the ring of polynomials over a field $\Field$ and the set of indeterminates $X=\{x_1,\ldots, x_n\}$. Let $\Field[X]_d$ denote the subspace of polynomials of degree at most $d$.

\begin{definition}\label{def:ideal}
The ideal (of $\Field[X]$) generated by a finite set of polynomials $\{f_1,\dots, f_m\}$ in $\Field[X]$ is defined as
\[\Ideal{ f_1,\ldots,f_m}:= \left\{\sum_{i=1}^m h_i f_i\mid h_1,\ldots,h_m\in \Field[X]\right\}.\] 

The set of polynomials that vanish on a given set $S\subset \Field^n$ is called the \textbf{\emph{vanishing ideal}} of $S$ and is defined as
$\Ideal{S}:= \{f\in \Field[X]\mid f(a_1,\ldots,a_n)=0, \forall (a_1,\ldots,a_n)\in S\}$.
\end{definition}

\begin{definition}
An ideal $I$ is \textbf{\emph{radical}} if $f^m \in I$ for some integer $m\geq 1$ implies that $f\in  I$.
\end{definition}

Other ways to denote $\Ideal{ f_1,\ldots,f_m}$ is by $\Ideal{ \{f_1,\ldots,f_m\}}$, $\langle f_1,\ldots,f_m \rangle$ and $\langle \{ f_1,\ldots,f_m\} \rangle$, and we will use notations interchangeably.
\begin{definition}\label{def:V(I)}
Let $\{f_1,\ldots, f_m\}$ be a finite set of polynomials in $\Field[X]$. The \textbf{\emph{affine variety}} defined by $f_1,\ldots, f_m$ is
$\Variety{ f_1,\ldots,f_m}:= \{(a_1,\ldots,a_n)\in \Field^n|\; f_i(a_1,\ldots,a_n)=0, \quad 1\leq i\leq m\}$.
Let $I\subseteq \Field[X]$ be an ideal. We will denote by $\mathbf{V}(I)$ the set $\Variety{I}=\{(a_1,\ldots,a_n)\in \Field^n\mid f(a_1,\ldots,a_n)=~0, \quad \forall f\in I\}$.
\end{definition}

\begin{theorem}[\cite{Cox}, Th.15, p.196]\label{th:ideal_intersection}
  If $I$ and $J$ are ideals in $\Field[X]$, then $\Variety{I\cap J}= \Variety{I}\cup \Variety{J}$.
\end{theorem}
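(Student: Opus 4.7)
The plan is to prove the set equality by establishing the two inclusions separately, which is the standard approach for such variety-ideal correspondences.

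For the easy inclusion $\Variety{I} \cup \Variety{J} \subseteq \Variety{I \cap J}$, I would simply observe that $I \cap J \subseteq I$ and $I \cap J \subseteq J$. Hence any point $a \in \Field^n$ at which every polynomial in $I$ vanishes must vanish on every polynomial of the smaller set $I \cap J$, and likewise for $J$. This gives $\Variety{I} \subseteq \Variety{I \cap J}$ and $\Variety{J} \subseteq \Variety{I \cap J}$, and taking the union yields the desired containment.

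For the reverse inclusion $\Variety{I \cap J} \subseteq \Variety{I} \cup \Variety{J}$, I would argue by contrapositive. Suppose $a \in \Field^n$ does not lie in $\Variety{I} \cup \Variety{J}$. Then there exist polynomials $f \in I$ and $g \in J$ with $f(a) \neq 0$ and $g(a) \neq 0$. The key trick is to form the product $fg$: since $I$ is an ideal, $fg = g \cdot f \in I$, and since $J$ is an ideal, $fg = f \cdot g \in J$, so $fg \in I \cap J$. Yet $(fg)(a) = f(a) g(a) \neq 0$ because $\Field$ is a field (hence an integral domain), so $fg$ is an element of $I \cap J$ that does not vanish at $a$. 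Therefore $a \notin \Variety{I \cap J}$, as required.

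There is no serious obstacle here; the only point that requires care is the second direction, where one must resist the temptation to combine $f$ and $g$ additively (which only shows $f + g \in I + J$, not in $I \cap J$). Using the multiplicative absorption property of ideals to form $fg$ is the essential idea, and the argument relies on $\Field[X]$ being an integral domain to ensure $f(a) g(a) \neq 0$.
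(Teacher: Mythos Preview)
The paper does not supply its own proof of this theorem; it is simply quoted from \cite{Cox} as a preliminary fact. Your argument is correct and is essentially the standard textbook proof (indeed the one in Cox--Little--O'Shea): the inclusion $\Variety{I}\cup\Variety{J}\subseteq\Variety{I\cap J}$ follows from $I\cap J\subseteq I,J$, and the reverse inclusion uses the product trick $fg\in I\cap J$ together with the fact that $\Field$ has no zero divisors.
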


\begin{definition}\label{def:ideal_sum}
  If $I$ and $J$ are ideals in $\Field[X]$, then the sum of $I$ and $J$, denoted by $I+J$, is the ideal $I+J=\{f+g\mid f\in I \textrm{ and } g\in J\}$.
\end{definition}

\begin{corollary}[\cite{Cox}, Cor.3, p.190] \label{cor:ideal_sum}
    If $f_1,\dots,f_r \in \Field[X]$, then $\langle f_1,\dots, f_r \rangle = \langle f_1 \rangle + \dots + \langle f_r \rangle$.
 
\end{corollary}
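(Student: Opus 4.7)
The plan is to prove the equality $\langle f_1,\dots,f_r\rangle = \langle f_1\rangle + \cdots + \langle f_r\rangle$ by establishing both set inclusions, unpacking \cref{def:ideal} and iterating \cref{def:ideal_sum}. Since \cref{def:ideal_sum} is phrased for the sum of two ideals, I would first clarify (either by a short inductive remark or by simply defining) that $\langle f_1\rangle + \cdots + \langle f_r\rangle$ means the set of all sums $g_1 + \cdots + g_r$ with $g_i \in \langle f_i\rangle$; a one-line induction on $r$ shows this agrees with the iterated binary sum and is itself an ideal (closed under addition and under multiplication by elements of $\Field[X]$).

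For the inclusion $\langle f_1,\dots,f_r\rangle \subseteq \langle f_1\rangle + \cdots + \langle f_r\rangle$, I would take an arbitrary $f \in \langle f_1,\dots,f_r\rangle$. By \cref{def:ideal} it has the form $f = \sum_{i=1}^r h_i f_i$ for some $h_i \in \Field[X]$. Each summand $h_i f_i$ belongs to $\langle f_i\rangle$ (again by \cref{def:ideal} applied to the singleton generating set $\{f_i\}$), hence $f$ is a sum of elements, one from each $\langle f_i\rangle$, so it lies in $\langle f_1\rangle + \cdots + \langle f_r\rangle$.

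For the reverse inclusion, I would pick $g \in \langle f_1\rangle + \cdots + \langle f_r\rangle$, write $g = g_1 + \cdots + g_r$ with $g_i \in \langle f_i\rangle$, and use \cref{def:ideal} to express each $g_i = h_i f_i$ for some $h_i \in \Field[X]$. Then $g = \sum_{i=1}^r h_i f_i \in \langle f_1,\dots,f_r\rangle$, which completes the containment. The statement then follows by combining both inclusions.

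This corollary is essentially a bookkeeping exercise rather than a substantive theorem, so there is no real obstacle; the only mild subtlety is handling the notational mismatch that \cref{def:ideal_sum} is stated for two ideals while the claim involves $r$ summands. I would handle this up front with the short inductive observation noted above, so that the main argument reduces to direct substitution between the two equivalent descriptions of a polynomial as $\sum h_i f_i$.
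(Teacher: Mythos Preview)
Your proof is correct and is exactly the standard argument. Note that the paper does not actually prove this statement: it is quoted verbatim from \cite{Cox} (Corollary~3, p.~190) as background, with no proof supplied, so there is nothing further to compare.
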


\begin{theorem}[\cite{Cox}, Th.4, p.190]\label{th:ideal_sum}
  If $I$ and $J$ are ideals in $\Field[X]$, then $\Variety{I+J}= \Variety{I}\cap \Variety{J}$.
\end{theorem}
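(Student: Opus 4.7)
The plan is to prove the set equality $\Variety{I+J} = \Variety{I} \cap \Variety{J}$ by establishing the two inclusions separately. This is a standard textbook fact and the argument is essentially a direct unpacking of the definitions of $I+J$ and of $\Variety{\cdot}$, so I do not expect any real obstacle; the key is simply to exploit that both $I$ and $J$ sit inside $I+J$ (since $0$ lies in each ideal), and conversely that every element of $I+J$ splits as a sum $f+g$ with $f \in I$, $g \in J$.

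For the inclusion $\Variety{I+J} \subseteq \Variety{I} \cap \Variety{J}$, I would take an arbitrary point $a \in \Variety{I+J}$. The first step is to observe the containments $I \subseteq I+J$ and $J \subseteq I+J$: indeed, any $f \in I$ can be written as $f + 0$ with $0 \in J$, and similarly for $J$. Since $a$ vanishes on every polynomial of $I+J$, it vanishes in particular on every polynomial of $I$ and of $J$, so $a \in \Variety{I}$ and $a \in \Variety{J}$.

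For the reverse inclusion $\Variety{I} \cap \Variety{J} \subseteq \Variety{I+J}$, I would pick any point $a \in \Variety{I} \cap \Variety{J}$ and any polynomial $h \in I+J$. By \cref{def:ideal_sum}, $h = f + g$ for some $f \in I$ and $g \in J$. Evaluating at $a$ gives $h(a) = f(a) + g(a) = 0 + 0 = 0$, since $a$ is a common zero of all polynomials in $I$ and in $J$. Therefore $a \in \Variety{I+J}$, completing the proof. The only genuinely content-bearing step is the elementary identification of the structure of elements of $I+J$; everything else is just definition-chasing.
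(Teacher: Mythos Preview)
Your proof is correct and is the standard double-inclusion argument. Note, however, that the paper itself does not supply a proof of this statement: it is quoted as a known result from \cite{Cox} (Theorem~4, p.~190) and used without further justification, so there is no paper proof to compare against. Your argument is precisely the textbook one and nothing more is needed.
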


\subsection{CSP and polymorphisms}\label{sec:CSPandpolymorphisms} 
Let $D$ denote a finite set called the  \textbf{\emph{domain}}.
By a $k$-ary \textbf{\emph{relation}} $R$ on a domain $D$ we mean a subset of the $k$-th Cartesian power $D^k$; $k$ is said to be the \textbf{\emph{arity}} of the relation. We often use relations and (affine) varieties interchangeably since both are subsets of $D^k$ (we will not refer to varieties from universal algebra in this paper). A \textbf{\emph{constraint language}} $\Gamma$ over $D$ is a finite set of relations over $D$. A constraint language is 
\textbf{\emph{Boolean}} if it is over the two-element domain $\{0,1\}$. 

A \emph{\textbf{constraint}} over a constraint language $\Gamma$ is an expression of the form $R(x_{i_1},\ldots, x_{i_k})$ where $R$ is a relation of arity $k$ contained in $\Gamma$, and $x_{i_1},\ldots, x_{i_k}$ are variables that belong to the variable set $X$. A constraint is satisfied by a mapping $\phi:X\rightarrow D$ if $(\phi(x_{i_1}),\ldots, \phi(x_{i_k}))\in R$.

\begin{definition}\label{def:csp}
 The \emph{(nonuniform) \textsc{Constraint Satisfaction Problem} ($\CSP$)} associated with language $\Gamma$ over $D$ is the problem $\CSP(\Gamma)$ in which: an instance is a triple $\Cc=(X,D,C)$ where $X=\{x_1,\ldots,x_n\}$ is a set of $n$ variables and $C$  is a set of constraints over $\Gamma$ with variables from $X$. The goal is to decide whether or not there exists a solution, i.e.~a mapping $\phi: X\rightarrow D$ satisfying all of the constraints. We will use $Sol(\Cc)\subseteq D^n$ to denote the set of solutions of $\Cc$.
\end{definition}
Moreover, we follow the algebraic approach to Schaefer's dichotomy result \cite{Schaefer78} formulated by Jeavons \cite{JEAVONS1998185} where each class of CSPs that are polynomial time solvable is associated with a polymorphism.
\begin{definition}\label{def:polymorph}
An operation $f:D^m \rightarrow D$ is a \textbf{polymorphism} of a relation $R\subseteq D^k$ if for any choice of $m$ tuples $(t_{11},\dots,t_{1k}),\dots,(t_{m1},\dots,t_{mk})$ from $R$ (allowing repetitions), it holds that the tuple obtained from these $m$ tuples by applying $f$ coordinate-wise, $(f(t_{11},\dots,t_{m1}),\dots,f(t_{1k},\dots,t_{mk}))$, is in $R$. We also say that $f$ \textbf{preserves} $R$, or that $R$ is \textbf{invariant} or \textbf{closed} with respect to $f$. A polymorphism of a constraint language $\Gamma$ is an operation that is a polymorphism of every $R\in \Gamma$.
\end{definition}
In this paper, we deal with two particular polymorphisms: the Boolean minority (\cref{def:minority}) and the dual discriminator (\cref{def:dual discriminator}).
\begin{example}\label{example:CSP}
Consider $D=\{0,1,2\}$ and relations $R_1=\{(0,1,1),(2,0,2),(2,2,1), (2,0,1)$, $(2,1,1)\}$ and $R_2=\{(1,1),(2,1)\}$. Let the language $\Gamma = \{R_1,R_2\}$. Consider the instance $\mathcal{C}=(X,D,C)$ of $\CSP(\Gamma)$ with $X=\{x,y,z\}$ and $C=\{C_1 = R_1 (x,y,z),C_2= R_2(x,z)\}$. The assignment $\phi$ where $\phi(x)=2, \phi(y)= 0, \phi(z)=1$ is a solution to the instance $\mathcal{C}$. 
\end{example}

\subsection{The ideal-CSP correspondence}\label{sect:idealCSP}
Let $\Cc=(X,D,C)$ be an instance of $\CSP(\Gamma)$ (see ~\cref{def:csp}). Without loss of generality, we shall assume that $D\subseteq \Field$.
Let $Sol(\Cc)$ be the (possibly empty) set of all feasible solutions of $\Cc$.
A simple construction shows that we can map $Sol(\Cc)$ to a zero-dimensional and radical ideal $\I_\Cc\subseteq \Field[X]$ such that $Sol(\Cc)=\Variety{I_\Cc}$. We refer to \cite{mastrolilli_talg21} for details.

In particular, let $Y=(x_{i_1},\ldots,x_{i_k})$ be a $k$-tuple of variables from $X$
and let $R(Y)$ be a non empty constraint from $C$.
$R(Y)$ can be mapped to a generating system of an ideal such that the projection of the variety of this ideal onto $Y$ is equal to $R(Y)$ (see~\cite{vandongenPhd} for more details).
%
%
%
%
%
%
%
Furthermore, for each variable $x_k$, we will assume that the polynomial $\prod_{j\in D}(x_{i_k}-j)$ is in $\I_\Cc$. The latter forces $x_k$ to take values only in $D$. We call it the \textbf{\emph{domain polynomial}} denoted by $dom(x_k)$.

We emphasize that if there is no solution, then we have from Hilbert's Weak Nullstellensatz (see e.g.~\cite{Cox,mastrolilli_talg21}) that $1\in\I_\Cc \iff Sol(\Cc)=\emptyset \iff\I_\Cc = \mathbb{Q}[X]$. 

\subsection{\GB bases}\label{sect:GBbasics}
A monomial ordering $>$ on $\Field[X]$ is a total ordering on the set of all monomials in $\Field[X]$, that is invariant under multiplication (i.e.~if $a > b$ and $c$ is any other monomial then $ac>bc$, see~\cite{Cox}, Definition 1, p.55).
We identify the monomial $x^\alpha=x_1^{\alpha_1}\cdots x_n^{\alpha_n}$ with the $n$-tuple of exponents $\alpha =(\alpha_1,\ldots,\alpha_n)\in \Zz^n_{\geq0}$. This establishes a one-to-one correspondence between the monomials in $\Field[X]$ and $\Zz^n_{\geq0}$. Any ordering $>$ we establish on the space $\Zz^n_{\geq0}$
will give us an ordering on monomials: if $\alpha > \beta$ according to this ordering, we also say that $x^\alpha > x^\beta$. The monomial ordering that we use in this paper is the graded lexicographic ordering.

\begin{definition}\label{def:lex and grlex} Let $\alpha =(\alpha_1,\ldots,\alpha_n),\beta=(\beta_1,\ldots,\beta_n)\in \Zz^n_{\geq0}$ and $|\alpha| = \sum_{i=1}^n\alpha_i$, $|\beta| = ~\sum_{i=1}^n\beta_i$.
  \begin{enumerate}[(i)]
      \item Lexicographic order: We say $\alpha>_\lex \beta$ if, in the vector difference $\alpha -\beta \in \Zz^n$, the left most nonzero entry is positive. 
      \item Graded lexicographic order: We say $\alpha>_\grlex \beta$ if $|\alpha| >|\beta|$, or $|\alpha| =|\beta|$ and $\alpha>_\lex \beta$.

  \end{enumerate}
\end{definition}

\begin{definition}
  For any $\alpha=(\alpha_1,\ldots,\alpha_n)\in \Zz^n_{\geq0}$, let $x^\alpha:= \prod_{i=1}^{n}x_i^{\alpha_i}$. Let $f= \sum_{\alpha} a_{\alpha}x^\alpha$ be a nonzero polynomial in $\Field[X]$ and let $>$ be a monomial order (in this paper, this is always the \grlex order).
  \begin{enumerate}[(i)]
    
    \item The \textbf{\emph{multidegree}} of $f$ is $\multideg(f):= \max(\alpha\in \Zz^n_{\geq0}:a_\alpha\not = 0)$.
    \item The \textbf{\emph{degree}} of $f$ is deg$(f):=\sum_{i=1}^n \alpha_i$, where $(\alpha_1,\ldots,\alpha_n)=\multideg(f)$. 
    \item The \textbf{\emph{leading coefficient}} of $f$ is $\LC(f):= a_{\multideg(f)}\in \Field$.
    \item The \textbf{\emph{leading monomial}} of $f$ is $\LM(f):= x^{\multideg(f)}$ (with coefficient 1).
    \item The \textbf{\emph{leading term}} of $f$ is $\LT(f):= \LC(f)\cdot \LM(f)$.
  \end{enumerate}
\end{definition}
%

The concept of \emph{reduction}, also called \emph{multivariate division} or \emph{normal form computation}, is central to \GB basis theory. It is a multivariate generalization of the Euclidean division of univariate polynomials.

\begin{definition}\label{def:reduction}
Fix a monomial order and let $G=\{g_1,\ldots,g_t\}\subset \Field[X]$. Given $f\in \Field[X]$, we say that \emph{\textbf{$f$ reduces to $r$ modulo $G$}}, written
$f\rightarrow_G r$,
if $f$ can be written in the form
$f=A_1g_1+\dots+A_t g_t+r$ for some $A_1,\ldots,A_t,r\in \Field[X]$,
such that:
\begin{enumerate}[(i)]
  \item no term of $r$ is divisible by any of $\LT(g_1),\ldots,\LT(g_t)$ and 
  \item whenever $A_i g_i\not=0$, we have $\multideg(f)\geq \multideg(A_ig_i)$.
\end{enumerate}
The polynomial remainder $r$ is called a \emph{\textbf{normal form of $f$ by $G$}} and will be denoted by $f|_G$.
\end{definition}

A normal form of $f$ by $G$, i.e.~$f|_G$, can be obtained by repeatedly performing the following until it cannot be further applied: choose any $g\in G$ such that $\LT(g)$ divides some term $t$ of $f$ and replace $f$ with $f-\frac{t}{\LT(g)}g$. Note that the order we choose the polynomials $g$ in the division process is not specified.
In general a normal form $f|_G$ is not uniquely defined.
Even when $f$ belongs to the ideal generated by $G$, i.e.~$f\in \Ideal{G}$, it is not always true that $f|_G=0$. In particular, the order in which polynomial division is performed is important, as the following example shows.
\begin{example}
  Let $f=xy^2-y^3$ and $G=\{g_1,g_2\}$, where $g_1=xy-1$ and $g_2=y^2-1$. Consider the graded lexicographic order (with $x>y$). If we divide first by $g_2$ the quotient is $(x-y)$ and the remainder is $(x-y)$; the latter is not divisible by $g_1$: $f = 0\cdot g_1 + (x-y)\cdot g_2 + x-y$. Vice versa, if we divide first by $g_1$ and then by $g_2$ we obtain a zero remainder: $f = y\cdot g_1 - y\cdot g_2 + 0$.
\end{example}
This non-uniqueness is the starting point of \GB basis theory.
\begin{definition}
Fix a monomial order on the polynomial ring $\Field[X]$. A finite subset $G = \{g_1,\ldots, g_t\}$ of an ideal $I \subseteq \Field[X]$ different from $\{0\}$ is said to be a \emph{\textbf{\GB basis}} (or \emph{\textbf{standard basis}}) if
$\langle \LT(g_1),\ldots, \LT(g_t)\rangle = \langle \LT(I)\rangle$, where we denote by $\langle \LT(I)\rangle$ the ideal generated by the elements of the set $\LT(I)$ of leading terms of nonzero elements of $I$.
\end{definition}
\begin{definition}\label{def:redGB}
The \textbf{\emph{reduced \GB basis}} of a polynomial ideal $I$ is a \GB basis $G$ of $I$ such that:
\begin{enumerate}[(i)]
  \item $\LC(g)= 1$ for all $g \in G$ and
  \item for all $g \in G$, no monomial of $g$ lies in $\GIdeal{\LT(G\setminus \{g\})}$.
\end{enumerate}
\end{definition}
It is known (see~\cite{Cox}, Theorem~5, p.93) that for a given monomial ordering, a polynomial ideal $I\not=\{0\}$ has a unique reduced \GB basis. 
\begin{proposition}[\cite{Cox}, Proposition~1, p.83]\label{th:gbprop}
Let $I\subset \Field[X]$ be an ideal and let $G=\{g_1,\ldots,g_t\}$ be a \GB basis for $I$. Then given $f\in \Field[X]$, $f$ can be written in the form
$f=A_1g_1+\dots+A_t g_t+r$ for some $A_1,\ldots,A_t,r\in \Field[X]$,
such that:
\begin{enumerate}[(i)]
  \item no term of $r$ is divisible by any of $\LT(g_1),\ldots,\LT(g_t)$,
  \item whenever $A_i g_i\not=0$, we have $\multideg(f)\geq \multideg(A_ig_i)$ and
  \item $r$ is unique.
\end{enumerate}
In particular, $r$ is the remainder on division of $f$ by $G$ no matter how the elements of $G$ are listed when using the division algorithm.
\end{proposition}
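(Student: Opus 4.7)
The plan is to establish existence of a representation satisfying (i) and (ii) by appealing to the multivariate division algorithm, and then to derive uniqueness from the defining property of a \GB basis.

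For existence, I would run the multivariate division algorithm on $f$ with respect to the ordered list $g_1, \ldots, g_t$. At each step, one selects a term $t$ of the current intermediate dividend that is divisible by some $\LT(g_i)$, subtracts $(t/\LT(g_i))\,g_i$, and repeats until no term of the current working polynomial is divisible by any $\LT(g_i)$; what remains is peeled off into $r$. Termination is guaranteed because each reduction strictly decreases the multidegree of the term being eliminated, and $>_\grlex$ is a well-ordering on $\Zz^n_{\geq 0}$. The quotient contributions collect into the $A_i$, and the construction immediately yields (ii) (each term added to $A_i$ has the form $t/\LT(g_i)$ with $t$ of multidegree at most $\multideg(f)$) and (i) (the algorithm does not halt until the remainder has the stated property).

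For uniqueness, suppose $f = \sum_{i=1}^t A_i g_i + r = \sum_{i=1}^t A_i' g_i + r'$, both satisfying (i). Then
\[
r - r' = \sum_{i=1}^t (A_i' - A_i)\, g_i \in I.
\]
If $r \neq r'$, then $\LT(r-r') \in \LT(I)$, and because $G$ is a \GB basis, $\LT(r-r') \in \GIdeal{\LT(g_1), \ldots, \LT(g_t)}$. By the standard fact that membership in a monomial ideal forces divisibility, some $\LT(g_j)$ must divide $\LT(r-r')$. But $\LT(r-r')$ is a term appearing in either $r$ or $r'$ (possibly as a nonzero combination, but any monomial occurring in $r-r'$ occurs in $r$ or $r'$), which by hypothesis has no term divisible by any $\LT(g_i)$. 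This contradiction forces $r = r'$, proving (iii). The ``in particular'' clause then follows because every run of the division algorithm, no matter the listing order of $G$, produces some remainder satisfying (i) and (ii), and such a remainder is unique.

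The main step to handle carefully is the uniqueness argument, specifically the claim that $\LT(r-r')$, lying in $\GIdeal{\LT(g_1), \ldots, \LT(g_t)}$, must be divisible by some individual $\LT(g_j)$: this uses the elementary lemma that an ideal generated by monomials contains a monomial $x^\alpha$ only if some generator $\LT(g_j)$ divides $x^\alpha$, which I would invoke without reproving. Everything else is direct bookkeeping from \cref{def:reduction} and the definition of a \GB basis.
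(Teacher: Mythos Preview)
The paper does not supply its own proof of this proposition; it is quoted as background from \cite{Cox} (Proposition~1, p.~83) and left unproved. Your argument is correct and is essentially the standard textbook proof one finds in Cox: existence via the division algorithm, and uniqueness by observing that the difference of two candidate remainders lies in $I$, so its leading term would have to be divisible by some $\LT(g_j)$, contradicting condition~(i). There is nothing to compare against in the paper itself.
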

\begin{corollary}[\cite{Cox}, Corollary~2, p.84]\label{th:imp}
  Let $G=\{g_1,\ldots,g_t\}$ be a \GB basis for $I\subseteq \Field[X]$ and let $f\in \Field[X]$. Then $f\in I$ if and only if the remainder on division of $f$ by $G$ is zero.
\end{corollary}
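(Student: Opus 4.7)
The plan is to derive this corollary directly from Proposition~\ref{th:gbprop}. First I would apply that proposition to $f$ to obtain a representation $f = A_1 g_1 + \cdots + A_t g_t + r$, where $r$ satisfies the normal-form condition (i), namely that no term of $r$ is divisible by any $\LT(g_i)$.

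The easy direction ($\Leftarrow$) follows immediately: if the remainder $r$ equals $0$, then $f = \sum_i A_i g_i$, which is an $\Field[X]$-linear combination of elements of $G \subseteq I$, and hence lies in $I$ since $I$ is closed under such combinations.

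For the forward direction ($\Rightarrow$), I would argue by contradiction. Assume $f \in I$ but $r \neq 0$. Writing $r = f - \sum_i A_i g_i$ exhibits $r$ as a difference of elements of $I$ (since $f \in I$ and each $g_i \in I$), so $r \in I$. Consequently $\LT(r)$ is a nonzero leading term of an element of $I$, i.e.~$\LT(r) \in \langle \LT(I)\rangle$. Since $G$ is a \GB basis, $\langle \LT(I)\rangle = \langle \LT(g_1),\ldots,\LT(g_t)\rangle$. Because the right-hand side is a monomial ideal, the standard monomial-ideal lemma forces $\LT(r)$ to be divisible by $\LT(g_i)$ for some $i$, contradicting condition (i) of Proposition~\ref{th:gbprop}. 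Hence $r = 0$.

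No real obstacle arises here, since the corollary is essentially a packaging of Proposition~\ref{th:gbprop} together with the very definition of a \GB basis. The only auxiliary ingredient is the elementary observation that a single monomial lying in a monomial ideal must be divisible by one of the generators, which is a standard consequence of comparing leading terms in any representation of the monomial in terms of the generators.
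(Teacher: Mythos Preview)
Your proof is correct and is exactly the standard argument (as given in Cox). Note that the paper itself does not supply a proof of this corollary; it simply cites it from \cite{Cox}, so there is nothing to compare against beyond the original source, which your argument matches.
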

\begin{definition}\label{def:pdiv}
We will write $\reduce f F$ for the remainder of $f$ by the ordered $s$-tuple $F=(f_1,\ldots,f_s)$. If $F$ is a $\GB$ basis for $\spn{f_1,\dots,f_s}$, then we can regard $F$ as a set (without any particular order) by Proposition~\ref{th:gbprop}.
\end{definition}
%

%
The ``obstruction'' to $\{g_1,\ldots, g_t\}$ being a \GB basis is the possible occurrence of polynomial combinations of $g_i$ whose leading terms are not in the ideal generated by the $\LT( g_i)$. One way (actually the only way) this can occur is if the leading terms in a suitable combination cancel, leaving only smaller terms. The latter is fully captured by the so called $S$-polynomials that play a fundamental role in \GB basis theory.
\begin{definition}\label{def:spoly}
Let $f,g\in \Field[X]$ be nonzero polynomials. Let $\multideg(f)$  $=\alpha$ and $\multideg(g)=\beta$, with $\gamma=(\gamma_1,\ldots,\gamma_n)$, where $\gamma_i = \max(\alpha_i,\beta_i)$ for each $i$. We call $x^\gamma$ the \emph{\textbf{least common multiple}} of $\LM(f)$ and $\LM(g)$, written $x^\gamma = \LCM(\LM(f),\LM(g))$.
The \emph{\textbf{$S$-polynomial}} of $f$ and $g$ is the combination $S(f,g) = \frac{x^\gamma}{\LT(f)}\cdot f - \frac{x^\gamma}{\LT(g)}\cdot g$.
\end{definition}
The use of $S$-polynomials to eliminate leading terms of multivariate polynomials generalizes the row reduction algorithm for systems of linear equations. If we take a system of homogeneous linear equations (i.e.: the constant coefficient equals zero), then it is not hard to see that bringing the system in triangular form yields a \GB basis for the system.
\begin{theorem}[\textbf{Buchberger's Criterion}] (See e.g. \cite{Cox}, Theorem 3, p.105)\label{th:crit}
A basis $G=\{g_1,\ldots,g_t\}$ for an ideal $I$ is a \GB basis if and only if $S(g_i,g_j)|_G= 0$ for all $i\not=j$.
\end{theorem}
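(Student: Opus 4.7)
The plan is to prove the two directions of the biconditional separately. The forward direction is easy: each $S$-polynomial $S(g_i,g_j)$ is by construction a $\Field[X]$-linear combination of $g_i$ and $g_j$, hence lies in $I$. Assuming $G$ is a \GB basis, \cref{th:imp} gives $S(g_i,g_j)|_G=0$ for every $i\neq j$.

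For the converse, I would assume every $S$-polynomial reduces to zero modulo $G$ and show that for any nonzero $f\in I$, $\LT(f)$ is divisible by $\LT(g_i)$ for some $i$, which suffices by the definition of a \GB basis. Given $f\in I$, pick a representation $f=\sum_{i=1}^t h_ig_i$ that minimizes $\delta\mydef \max_i \multideg(h_ig_i)$ (well-defined because the monomial order is a well-ordering). Clearly $\multideg(f)\leq\delta$; if $\multideg(f)=\delta$, then some $\LT(h_ig_i)$ with $\multideg(h_ig_i)=\delta$ survives cancellation and contributes to $\LT(f)$, making $\LT(f)$ divisible by the corresponding $\LT(g_i)$, and we are done.

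The main obstacle is ruling out $\multideg(f)<\delta$. Suppose for contradiction this occurs, and let $I(\delta)=\{i:\multideg(h_ig_i)=\delta\}$. Writing $h_i=c_ix^{\alpha(i)}+(\text{lower terms})$ for $i\in I(\delta)$ and letting $p_i\mydef \frac{x^{\alpha(i)}}{\LC(g_i)}g_i$, each $p_i$ is monic of multidegree $\delta$, and the cancellation hypothesis $\multideg\!\bigl(\sum_{i\in I(\delta)} c_ix^{\alpha(i)}g_i\bigr)<\delta$ forces $\sum_{i\in I(\delta)} c_i\LC(g_i)=0$. The key computation is that for $i,j\in I(\delta)$, if $x^{\gamma_{ij}}=\LCM(\LM(g_i),\LM(g_j))$ then $x^{\gamma_{ij}}$ divides $x^\delta$, and a direct expansion yields $p_i-p_j=x^{\delta-\gamma_{ij}}\,S(g_i,g_j)$. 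Using the telescoping identity
\[
\sum_{i\in I(\delta)} c_i\LC(g_i)\,p_i \;=\; \sum_{k} d_k\,(p_{i_k}-p_{j_k})
\]
(made possible by $\sum c_i\LC(g_i)=0$, with $d_k$ partial sums of the $c_i\LC(g_i)$) I can rewrite the degree-$\delta$ part of $\sum_i h_ig_i$ as a $\Field$-linear combination of terms $x^{\delta-\gamma_{ij}}S(g_i,g_j)$.

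Now I invoke the standing hypothesis: each $S(g_i,g_j)=\sum_k a^{k}_{ij}g_k$ with $\multideg(a^{k}_{ij}g_k)\leq \multideg(S(g_i,g_j))<\gamma_{ij}$ (the strict inequality comes from the construction of $S$-polynomials, whose leading terms cancel). Multiplying by $x^{\delta-\gamma_{ij}}$ gives $x^{\delta-\gamma_{ij}}S(g_i,g_j)=\sum_k b^{k}_{ij}g_k$ with $\multideg(b^{k}_{ij}g_k)<\delta$. Substituting back into the expression for $f$ produces a new representation $f=\sum_i \tilde h_i g_i$ whose maximum multidegree is strictly smaller than $\delta$, contradicting minimality. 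Hence $\multideg(f)=\delta$ and the argument is complete.

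The only real difficulty is the telescoping/cancellation lemma in the previous paragraph — verifying the identity $p_i-p_j=x^{\delta-\gamma_{ij}}S(g_i,g_j)$ and packaging the sum so that cancellation among degree-$\delta$ terms is expressed through $S$-polynomials. Once this is in hand, the descent argument, together with the well-ordering property of $>_{\grlex}$, closes the proof.
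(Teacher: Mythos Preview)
The paper does not actually prove this theorem: \cref{th:crit} is stated as a background result with a citation to \cite{Cox} and no proof is given in the paper itself. Your argument is precisely the standard proof from that reference (the forward direction via \cref{th:imp}, and the converse via a minimal-$\delta$ representation, the cancellation lemma rewriting $\sum_{i\in I(\delta)} c_i\LC(g_i)p_i$ as a combination of differences $p_i-p_j=x^{\delta-\gamma_{ij}}S(g_i,g_j)$, and descent using the standard representations of the $S$-polynomials). It is correct as written; the only cosmetic remark is that in the final line you invoke the well-ordering property of $>_{\grlex}$, whereas the statement and proof hold for any monomial order (all of which are well-orderings), so you could simply say ``the well-ordering property of the monomial order''.
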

%
By Theorem~\ref{th:crit} it is easy to show whether a given basis is a \GB basis. Indeed, if $G$ is a \GB basis then given $f\in \Field[X]$, $f|_G$ is unique and it is the remainder on division of $f$ by $G$, no matter how the elements of $G$ are listed when using the division algorithm.
Furthermore, Theorem~\ref{th:crit} leads naturally to an algorithm for computing \GB bases for a given ideal $I=\langle f_1,\ldots,f_s \rangle$: start with a basis $G=\{f_1,\ldots,f_s\}$ and for any pair $f,g\in G$ with $S(f,g)|_G\not= 0$ add $S(f,g)|_G$ to $G$.
This is known as Buchberger's algorithm~\cite{BUCHBERGER2006475} (for more details see \cite[p. 90]{Cox} or \cite[Algorithm 1]{mastrolilli_talg21}).
In the remainder of the paper we will make repeated use of the following simple fact~\cite[Proposition 4 p. 106]{Cox}.
\begin{proposition}\label{th:rel_prime}
    We say the leading monomials of two polynomials $f$, $g$ are \emph{relatively prime} if $\LCM(\LM(f), \LM(g)) = \LM(f) \cdot \LM(g)$. Given a finite set $G \subseteq \Field[x_1,\ldots,x_n]$, suppose that we have $f, g \in G$ such that the leading monomials of $f$
and $g$ are relatively prime. Then $S(f, g)|_G= 0$.
\end{proposition}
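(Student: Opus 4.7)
The plan is to show $S(f,g)$ can be written as a syzygy-like combination of $f$ and $g$ whose coefficients satisfy the multidegree condition of \cref{def:reduction}, with remainder $0$. Write $f = \LT(f) + p$ and $g = \LT(g) + q$, where $\multideg(p) < \multideg(f) = \alpha$ and $\multideg(q) < \multideg(g) = \beta$. By the coprimality hypothesis, $x^\gamma = \LM(f)\LM(g)$, so
\[
S(f,g) = \frac{x^\gamma}{\LT(f)}\, f - \frac{x^\gamma}{\LT(g)}\, g = \frac{\LM(g)}{\LC(f)}\, f - \frac{\LM(f)}{\LC(g)}\, g.
\]
Substituting $\LM(f) = (f-p)/\LC(f)$ and $\LM(g) = (g-q)/\LC(g)$ and collecting, the $f\cdot g$ cross terms cancel and one obtains the identity
\[
\LC(f)\LC(g)\, S(f,g) \;=\; p\cdot g \;-\; q\cdot f.
\]

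This rewrites $S(f,g)$ as an explicit combination of $f$ and $g$ with remainder zero. The remaining task is to check that this combination satisfies the multidegree condition required by \cref{def:reduction}, i.e.\ $\multideg(p\cdot g) \le \multideg(S(f,g))$ and $\multideg(q\cdot f) \le \multideg(S(f,g))$. For this, I would argue that there is no leading-term cancellation in $pg - qf$: the leading monomials of these two summands are $\LM(p)\LM(g)$ and $\LM(q)\LM(f)$ respectively, so it suffices to show they are distinct. The main obstacle — and the point where coprimality is actually used — is exactly this non-cancellation: if $\LM(p)\LM(g) = \LM(q)\LM(f)$, then $\LM(f)$ would divide $\LM(p)\LM(g)$, and using that $\gcd(\LM(f),\LM(g)) = 1$ this forces $\LM(f)\mid \LM(p)$. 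Since any monomial order satisfies $x^a \ge x^b$ whenever $x^b\mid x^a$, this would give $\LM(p) \ge \LM(f)$, contradicting $\multideg(p) < \alpha$.

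Hence the leading terms of $pg$ and $qf$ do not cancel, and so
\[
\multideg(pg - qf) \;=\; \max\bigl(\multideg(p)+\beta,\; \multideg(q)+\alpha\bigr) \;\ge\; \multideg(p)+\beta,\; \multideg(q)+\alpha,
\]
which is exactly the multidegree condition for the expression $S(f,g) = \tfrac{1}{\LC(f)\LC(g)}(p\cdot g - q\cdot f)$. Condition (i) of \cref{def:reduction} holds vacuously because the remainder is $0$. Thus $S(f,g)\to_{\{f,g\}} 0$, and by padding all other coefficients with $0$ we conclude $S(f,g)\to_G 0$, i.e.\ $S(f,g)|_G = 0$, as desired.
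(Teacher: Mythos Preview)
Your argument is correct and is essentially the standard proof (the one given in Cox--Little--O'Shea, which is exactly what the paper cites). The paper itself does not supply a proof of this proposition; it simply records it as \cite[Proposition~4, p.~106]{Cox}, so there is nothing to compare against beyond the textbook argument you have reproduced.

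One small edge case you glossed over: your non-cancellation step tacitly assumes both $p\neq 0$ and $q\neq 0$ (otherwise $\LM(p)$ or $\LM(q)$ is undefined). If, say, $p=0$ then $S(f,g)=-\tfrac{q}{\LC(f)\LC(g)}\,f$ and the multidegree condition is immediate; similarly if $q=0$. It would be worth adding a one-line remark to dispose of these degenerate cases before invoking the $\LM(p)\LM(g)\neq\LM(q)\LM(f)$ argument.
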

Note that Buchberger's algorithm is non-deterministic and the resulting \GB basis in not uniquely determined by the input. This is because the normal form $S(f,g)|_G$ (see \cite[Algorithm 1, line 8]{mastrolilli_talg21}) is not unique as already remarked.
We observe that one simple way to obtain a deterministic algorithm (see \cite{Cox}, Theorem~2, p. 91) is to replace $h:=S(f,g)|_G$ in \cite[Algorithm 1, line 8]{mastrolilli_talg21} with $h:=\reduce {S(f,g)} {G''}$ (see Definition~\ref{def:pdiv}), where in the latter $G''$ is an ordered tuple. However, there are simple cases where the combinatorial growth of the set $G$ in Buchberger's algorithm is out of control very soon.

\subsection{The ideal membership problem \texorpdfstring{$\IMP(\Gamma)$}{IMP(Gamma)}}\label{sec:IMPdef}
For a given CSP($\Gamma$)-instance $\mathcal{C}$, the \textbf{\emph{combinatorial ideal}} $\I_\Cc$ is defined as the vanishing ideal of the set $Sol(\Cc)$, i.e.~$\I_\Cc=\Ideal{Sol(\Cc)}$ (see~\cref{def:ideal}). 

\begin{definition}\label{def:IMP}
 The {\emph{\textsc{Ideal Membership Problem}}} associated with language $\Gamma$ is the problem $\IMP(\Gamma)$ in which
 the input consists of a polynomial $f\in \mathbb{Q}[X]$ and a $\CSP(\Gamma)$ instance $\Cc=(X,D,C)$ where $D\subset \mathbb{Q}$. The goal is to decide whether $f$ lies in the combinatorial ideal~$\I_\Cc$. We use $\IMP_d(\Gamma)$ to denote $\IMP(\Gamma)$ when the input polynomial $f$ has degree at most $d$.
\end{definition}

We restrict to polynomials of maximum degree $d$ and define a $d$-truncated \GB Basis.
\begin{definition}\label{def:dTruncated GB}
  If $G$ is a \GB basis of an ideal in $\mathbb{F}[x_1,\dots,x_n]$, the \textbf{d-truncated \GB basis} $G'$ of $G$ is defined as
  \begin{equation}\label{eq:Gd}
      G' := G \cap \mathbb{F}[x_1,\dots,x_n]_d,
  \end{equation}
  where $\mathbb{F}[x_1,\dots,x_n]_d$ is the set of polynomials of degree less than or equal to $d$.
\end{definition}
%
It is not necessary to compute  a \GB basis of $\I_\Cc$ in its entirety to solve the $\IMP_d$ and to find a proof of membership. Indeed, it is sufficient to compute a $d$-truncated \GB basis $G'$ with respect to a \grlex order. In fact, for a polynomial $f_0$ of degree $d$, the only polynomials of $G$ that can divide $f_0$ are those of $G'$. Moreover, the residues of such divisions have degree at most~$d$. The latter is a property of the \grlex order (see also \cref{rm:order}). 
%
%
By \cref{th:gbprop,th:imp}, the membership test can be computed by using only polynomials from $G'$ and therefore we have
\begin{equation*}
    f \in\I_\Cc \cap \mathbb{F}[x_1,\dots,x_n]_d \iff \reduce{f}{G'}=0.
\end{equation*}
From the previous observations it follows that if we can compute $G'$ in $n^{O(d)}$ time then this yields an algorithm that runs in polynomial time for any fixed $d$ to solve $\IMP_d$. 

\begin{remark}\label{rm:order}
    It is worth noting that other total degree orderings, i.e.~ordered according to the total degree first, could be used instead of \grlex order. However other ordering like \lex order alone would not be sufficient to guarantee the solvability of $\IMP_d$ in polynomial time. Indeed, by dividing a polynomial of degree $d$ by a $\GB$ basis in \lex ordering we are no longer guaranteed that the remainder is of degree at most $d$, as was the case with the \grlex order.
\end{remark}

\section{Boolean Minority}\label{sect:minority}
In this section we focus on instances $\Cc=(X,D,C)$ of $\CSP(\Gamma)$ where $\Gamma$ is a language that is closed under the
minority polymorphism and provide the proof of  \cref{thm:MainTheorem}.
To this end, let us first introduce the definition of minority operation. 
\begin{definition}\label{def:minority}
For a finite domain $D$, a ternary operation $\otimes$ is called a minority operation if  $\otimes(a,a,b)=\otimes(a,b,a)=\otimes(b,a,a)=b$ for all $a,b\in D$.  
\end{definition}
A minority polymorphism of $\Gamma$ is a minority operation which preserves $\Gamma$ (see \cref{def:polymorph}).
Note that there is only one such polymorphism for relations over the Boolean domain, and we concern ourselves only with the minority polymorphism over the Boolean domain in this paper. 

\paragraph{Overiew of the proof of \cref{thm:MainTheorem}.}
%
A high level description of the proof structure is as follows. Each constraint that is closed under the minority polymorphism can be written in terms of linear equation (mod 2) (see e.g. \cite{Chen09}).
We first express these equations in their reduced row echelon form: that is to say the `leading variable' (the variable that comes first in the lexicographic order or \lex in short, see \cref{def:lex and grlex}) in each equation does not appear in any other (mod 2) equation. 
We then show how each polynomial in (mod 2) translates to a polynomial in regular arithmetic with exactly the same 0/1 solutions. The use of elementary symmetric polynomials allows for an efficient computation of the polynomials in regular arithmetic. 
Using these, we produce a set of polynomials $G_1$ and prove that $G_1$ is the reduced \GB basis of $\I_\Cc$ in the \lex order (see \cref{lemma:Groebnerlex}). These polynomials are unwieldy to use as is for the $\IMP_d$ as these polynomials could be of degree $n$.
We provide a conversion algorithm in \cref{sec:conversion} which converts $G_1$ to the $d$-truncated reduced \GB basis $G_2$ of $\I_\Cc$ in the \grlex order.
\cref{thm:correctness_algorithm} proves the correctness and polynomial running time of the conversion algorithm. This gives the proof of \cref{thm:MainTheorem}.
%

\subsection{\GB bases in \lex order}\label{sec:GBlex}

Consider an instance $\mathcal{C}=(X=\{x_1,\dots,x_n\},D=\{0,1\},C)$ of CSP($\Gamma$) where $\Gamma$ is $\otimes$-closed. Any constraint of $\Cc$ can be written as a system of linear equations over $\textrm{GF}(2)$ (see e.g. \cite{Chen09}). 
These linear systems with variables $x_1,\dots,x_n$ can be solved by Gaussian elimination. If there is no solution, then we have from Hilbert's Weak Nullstellensatz (see e.g.~\cite{Cox,mastrolilli_talg21}) that $1\in\I_\Cc \iff Sol(\Cc)=\emptyset \iff\I_\Cc = \mathbb{Q}[X]$. If $1\in\I_\Cc$ the reduced \GB basis is $\{1\}$. We proceed only if $Sol(\Cc)\neq\emptyset$.
In this section, we assume the \lex order $>_\lex$ with $x_1>_\lex x_2>_\lex \dots >_\lex x_n$. We also assume that the linear system has $r\leq n$ equations and is already in its reduced row echelon form with $x_i$ as the leading monomial of the $i$-th equation. Let $Supp_i\subset [n]$ such that $\{x_j:j\in Supp_i\}$ is the set of variables appearing in the $i$-th equation of the linear system except for $x_i$. Let the $i$-th equation be $R_i = 0 \Mod{2}$ where 
\begin{equation}\label{eq:Ri}
    R_i := x_i\oplus f_i,     
\end{equation}
with $i\in[r]$ and $f_i$ is the Boolean function of the form $(\bigoplus_{j\in Supp_i} x_j)\oplus \alpha_i$ and $\alpha_i \in \{0,1\}$.

In the following we transform $R_i$'s into polynomials in regular arithmetic. 
The idea is to map $R_i$ to a polynomial $R_i'$ over $\mathbb{Q}[X]$ such that $a\in \{0,1\}^n$ satisfies $R_i=0$ if and only if $a$ satisfies $R_i' = 0$. Moreover, $R_i$ is such that it has the same leading term as $R_i'$. 
We produce a set of polynomials $G_1$ and prove that $G_1$ is the reduced \GB basis of $\I_\Cc$ over $\mathbb{Q}[X]$ in the \lex ordering.
We define $R'_i$ as
\begin{equation}
    R_{i}':= x_i - M(f_i) \label{eq:R'_i}
\end{equation}
where
\begin{equation}\label{eq:mod2regexpansion}
    M(f_i) = 
    \begin{cases}
    \sum\limits_{k=1}^{|Supp_i|} \left( (-1)^{k-1}\cdot 2^{k-1}\sum\limits_{\{x_{j_1},\ldots, x_{j_k}\}\subseteq Supp_i}x_{j_1}x_{j_2}\cdots x_{j_k}\right) \textrm{ when } \alpha_i=0\\
    1+\sum\limits_{k=1}^{|Supp_i|} \left( (-1)^{k}\cdot 2^{k-1}\sum\limits_{\{x_{j_1},\ldots,x_{j_k}\}\subseteq Supp_i}x_{j_1}x_{j_2}\cdots x_{j_k}\right)  \textrm{ when } \alpha_i=1
    \end{cases}
\end{equation}

 \begin{example}\label{ex:min1}
 Consider a system with just one equation with $R_1:= x_1\oplus x_2 \oplus x_3 = 0$, where  $x_1>_\lex x_2>_\lex x_3$. Then $f_1:= x_2 \oplus x_3$ and $M(f_1):= x_2 + x_3 -2x_2x_3$. The polynomial corresponding to \cref{eq:R'_i} is
\begin{equation*}
     R_{1}':= x_1-x_2-x_3+2x_2x_3.
 \end{equation*}
 The equations $R_1=0$ and $R_{1}'=0$ have the same set of 0/1 solutions and  $\LM(R_1)=\LM(R_{1}')=x_1$. 
\end{example}

\begin{lemma}\label{lemma:Groebnerlex} 
Consider the following set of polynomials:
\begin{equation}\label{eq:GrobBasisLex}
    G_1=\{R_1',\ldots, R_r',x_{r+1}^2-x_{r+1},\ldots, x_n^2-x_n\},
\end{equation}
where $R_i'$ is from \cref{eq:R'_i}. $G_1$ is the reduced \GB basis of $\I_\Cc$ in the lexicographic order $x_1>_\lex x_2>_\lex \dots>_\lex x_n$.
\end{lemma}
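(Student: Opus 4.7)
The plan is to establish \cref{lemma:Groebnerlex} in four short steps: (i) verify $G_1 \subseteq \I_\Cc$; (ii) apply Buchberger's criterion to conclude $G_1$ is a \GB basis of $\langle G_1 \rangle$; (iii) upgrade this to the equality $\langle G_1 \rangle = \I_\Cc$; (iv) check reducedness from \cref{def:redGB}.

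For step (i), since the system is in reduced row echelon form, $Supp_i \subseteq \{r+1,\ldots,n\}$ — no other leading variable $x_k$ ($k\in[r]$, $k\neq i$) appears in the $i$-th equation. A routine induction on $|Supp_i|$ using $a \oplus b = a + b - 2ab$ for $a,b\in\{0,1\}$ (and $y \oplus 1 = 1 - y$ to handle $\alpha_i=1$) shows that the polynomial $M(f_i)$ in \cref{eq:mod2regexpansion} agrees with $f_i$ on $\{0,1\}^n$, i.e.~$M(f_i)(a) \equiv f_i(a) \Mod{2}$ for all $a\in\{0,1\}^n$. Hence $R_i'(a)=0$ iff $R_i(a) \equiv 0 \Mod{2}$, and every $R_i'$ vanishes on $Sol(\Cc)$. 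The domain polynomials $x_j^2 - x_j$ trivially vanish on $\{0,1\}^n \supseteq Sol(\Cc)$, giving $G_1 \subseteq \I_\Cc$ and hence $\langle G_1\rangle \subseteq \I_\Cc$.

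For step (ii), note that in the \lex order with $x_1 >_\lex \cdots >_\lex x_n$ we have $\LT(R_i') = x_i$ for $i \in [r]$ (the other monomials of $R_i'$ involve only variables from $Supp_i \subseteq \{x_{r+1},\ldots,x_n\}$) and $\LT(x_j^2 - x_j) = x_j^2$ for $j > r$. Any two elements of $G_1$ therefore have leading monomials supported on disjoint variables, and in particular are relatively prime. By \cref{th:rel_prime}, every $S$-polynomial among pairs in $G_1$ reduces to $0$ modulo $G_1$, so by \cref{th:crit}, $G_1$ is a \GB basis of $\langle G_1 \rangle$.

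For step (iii) — the one I expect to be the main obstacle — take $f \in \I_\Cc$ and let $r := \reduce{f}{G_1}$, which is well-defined by \cref{th:gbprop} since $G_1$ is a \GB basis of $\langle G_1\rangle$. Because $r \equiv f \pmod{\langle G_1\rangle}$ and $\langle G_1\rangle \subseteq \I_\Cc$, we have $r \in \I_\Cc$, so $r$ vanishes on $Sol(\Cc)$. By the defining property of normal forms, no term of $r$ is divisible by $x_i$ ($i \in [r]$) or by $x_j^2$ ($j > r$); hence $r$ is a multilinear polynomial in the free variables $x_{r+1},\ldots,x_n$ alone. Since the linear system is solvable and in reduced row echelon form, every $b \in \{0,1\}^{n-r}$ extends uniquely to a solution $a \in Sol(\Cc)$ (setting $x_i = M(f_i)(b) \in \{0,1\}$ for $i\in[r]$), so the projection of $Sol(\Cc)$ onto the free variables is all of $\{0,1\}^{n-r}$. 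A multilinear polynomial vanishing on $\{0,1\}^{n-r}$ is identically zero, so $r = 0$ and $f \in \langle G_1\rangle$ by \cref{th:imp}; this gives the reverse inclusion $\I_\Cc \subseteq \langle G_1\rangle$.

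Step (iv) is then a direct check against \cref{def:redGB}: each generator is monic, and for $R_i'$ no non-leading monomial (a squarefree product of variables in $Supp_i \subseteq \{r+1,\ldots,n\}$) is divisible by any $x_l$ with $l\in[r]\setminus\{i\}$ or by any $x_k^2$ with $k>r$; similarly, the non-leading monomial $x_j$ of $x_j^2 - x_j$ is not divisible by any other leading term in $G_1$. This completes the sketch.
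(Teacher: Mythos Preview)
Your proposal is correct and follows essentially the same approach as the paper: both proofs verify $\langle G_1\rangle\subseteq\I_\Cc$ via the identity $a\oplus b=a+b-2ab$, invoke \cref{th:rel_prime} and Buchberger's criterion on the pairwise relatively prime leading monomials, and then close the gap $\I_\Cc\subseteq\langle G_1\rangle$ by observing that the remainder of any $f\in\I_\Cc$ modulo $G_1$ is multilinear in the free variables $x_{r+1},\ldots,x_n$ and vanishes on all of $\{0,1\}^{n-r}$ (since every such assignment extends to a solution), hence is identically zero. One cosmetic point: you reuse the symbol $r$ for both the number of pivot equations and the remainder $\reduce{f}{G_1}$; rename the latter to avoid confusion.
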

\begin{proof} For any two Boolean variables $x$ and $y$,
\begin{equation}\label{eq:mod2arith}
    x\oplus y = x+y- 2xy.
\end{equation}
By repeatedly using \cref{eq:mod2arith} to obtain the equivalent expression for $f_i$, we see that $R_i=0 \Mod{2}$ and $R_i'=0$ have the same set of 0/1 solutions. Therefore $\Variety{\GIdeal{G_1}}$ is equal to $Sol(\Cc)$. This implies that $\GIdeal{G_1} \subseteq\I_\Cc$. Moreover, $\LM(R_i)=\LM(R_i')=x_i$, by construction. 

Now, we first proceed to show that $G_1$ is a \GB basis for $\GIdeal{G_1}$. With this aim we consider Buchberger's Criterion (see \cref{th:crit}) which suggest us to take into consideration all the pairs of polynomials in $G_1$ and compute the $S$-polynomial.
For every pair of polynomials in $G_1$ the reduced $S$-polynomial is zero as the leading monomials of any two polynomials in $G_1$ are relatively prime (see \cref{th:rel_prime}). By Buchberger's Criterion (see \cref{th:crit}) it follows that $G_1$ is a \GB basis of $\GIdeal{G_1}$ over $\mathbb{Q}[X]$ (according to the \lex order). In fact, inspection can reveal that $G_1$ is the \textit{reduced} \GB basis of $\GIdeal{G_1}$.

It remains to show that $\I_\Cc = \GIdeal{G_1}$. Note that by construction $\GIdeal{G_1}\subseteq\I_\Cc$. Therefore, to prove that $\I_\Cc = \GIdeal{G_1}$, it is sufficient to prove that for any $p \in\I_\Cc$ we have $p\in \GIdeal{G_1}$. 
To this end, consider a polynomial $p \in\I_\Cc$ (then $p(s)=0$ for all $s\in Sol(\Cc)$). The claim follows by proving that $p|_{G_1} = 0$ as this implies $p\in \GIdeal{G_1}$. 

We start observing that $p|_{G_1}$ cannot contain the variable $x_i$ for all $1\leq i \leq r$. Hence $p|_{G_1}$ is a polynomial multilinear in $x_{r+1},x_{r+2},\dots,x_n$.  In addition, $p|_{G_1} \in\I_\Cc$ since $G_1 \subseteq\I_\Cc$.
Now, it is important to emphasize that \emph{any} tuple $\mathbf{a}=(a_{r+1},\ldots, a_n)\in D^{n-r}$ extends to a unique feasible solution in $Sol(\Cc)$. Indeed for any partial assignment $x_{r+1}:=a_{r+1},\ldots, x_{n}:=a_n$ for the variables in $\{x_{r+1},x_{r+2},\dots,x_n\}$, we can compute the corresponding unique values for the variables in $\{x_i\mid 1\leq i\leq r\}$ that satisfy $R'_1,\ldots R'_r$ (see \cref{eq:Ri} and \cref{eq:R'_i}).
Therefore, all tuples in $D^{n-r}$ are zeros of $p|_{G_1}$. The latter implies that $p|_{G_1}$ is the zero polynomial. Hence $G_1$ is the reduced \GB basis of $\I_\Cc$.
%
\end{proof}

\begin{example}
    We refer to \cref{ex:min1}, which we complement with the following simple example for \cref{lemma:Groebnerlex}. For every pair of polynomials in $G_1=\{R_{1}',x_2^2-x_2,x_3^2-x_3\}$ the reduced $S$-polynomial is zero. By Buchberger's Criterion (see e.g. \cite{Cox} or \cref{th:crit}) it follows that $G_1$ is a \GB basis over $\Real[x_1,x_2,x_3]$ (according to the specified \lex order). 
\end{example}
Note that the reduced \GB basis $G_1$ in \cref{eq:GrobBasisLex} can be efficiently computed by exploiting the high degree of symmetry in each $M(f_i)$ and using elementary symmetric polynomials with variables from $Supp_i$.
Observe that $G_1$ is the reduced Groebner basis of $\I_\Cc$. Since $\I_\Cc$ is radical then $\langle G_1\rangle=I_C$ and therefore the ideal generated by $G_1$ is radical.

\subsection{Computing a truncated \GB basis}\label{sec:conversion}

Now that we have the reduced \GB basis in \lex order, we show how to obtain the $d$-truncated reduced \GB basis in \grlex order in polynomial time for any fixed $d=O(1)$. Before we describe our conversion algorithm, we show how to expand a product of Boolean functions. This expansion will play a crucial step in our algorithm.    
\subsubsection{Expansion of a product of Boolean functions}\label{sec:ProductOfBooleanFunctions}
In this section, we show a relation between a product of Boolean functions and (mod 2) sums of the Boolean functions, which is heavily used in our conversion algorithm in \cref{sec:OurAlgorithm}. We have already seen from \cref{eq:mod2arith} that if $f,g$ are two Boolean functions,\footnote{We earlier considered Boolean variables, but the same holds for Boolean functions.}
then 

\begin{equation*}
    2\cdot f\cdot g = f+g-(f\oplus g).
\end{equation*}

By repeatedly using the above equation, the following holds for Boolean functions $f_1,$ $f_2,\dots,f_m$:

\begin{equation}\label{eq:mod2expansion}
\begin{aligned}
f_1\cdot f_2 \cdots f_m = \frac{1}{2^{m-1}}\biggl[ &\sum_{i\in[m]} f_i - \sum_{\{i,j\}\subset [m]} (f_i\oplus f_j) + \sum_{\{i,j,k\}\subset [m]} (f_i\oplus f_j \oplus f_k) + \dots +\\ 
&(-1)^{m-1}(f_1\oplus f_2 \oplus \dots \oplus f_m)\biggr].
\end{aligned}
\end{equation}
We call each Boolean function of the form $(f_{i_1}\oplus\cdots\oplus f_{i_k})$ a \textbf{Boolean term}. We call the Boolean term $(f_1\oplus f_2 \oplus \dots \oplus f_m)$ the \textbf{longest Boolean term} of the expansion.
Thus, a product of Boolean functions can be expressed as a linear combination of Boolean terms. Note that \cref{eq:mod2expansion} is \textit{symmetric} with respect to $f_1,f_2,\dots,f_m$ as any $f_i$ interchanged with $f_j$ produces the same expression. It is no coincidence that we chose the letter $f$ in the above equation: we later apply this identity using $f_j$ from $R_j:=x_j\oplus f_j$ (see \cref{sec:GBlex}). When we use \cref{eq:mod2expansion} in the conversion algorithm, we will have to evaluate a product of at most $d$ functions, i.e.~$m\leq d=O(1)$.
We now see in the right hand side of \cref{eq:mod2expansion} that the coefficient $1/2^{m-1}$ is of constant size and there are $O(1)$ many Boolean terms.

\subsubsection{Our conversion algorithm}\label{sec:OurAlgorithm}
The FGLM \cite{FAUGERE1993329} conversion algorithm is well known in computer algebra for converting a given reduced \GB basis of a zero dimensional ideal in some ordering to the reduced \GB basis in any other ordering. However, it does so with $O(nD(\langle G_1 \rangle)^3)$ many arithmetic operations, where $D(\GIdeal{G_1})$ is the dimension of the $\mathbb{R}$-vector space $\mathbb{R}[x_1,\dots,x_n]/\GIdeal{G_1}$ (see Proposition 4.1 in \cite{FAUGERE1993329}). $D(\GIdeal{G_1})$ is also equal to the number of common zeros (with multiplicity) of the polynomials from $\GIdeal{G_1}$, which would imply that for the combinatorial ideals considered in this paper, $D(\GIdeal{G_1})=\Omega(2^{n-r})$. This exponential running time is avoided in our conversion algorithm, which is a variant the FGLM algorithm, by exploiting the symmetries in \cref{eq:mod2regexpansion} and by truncating the computation up to degree $d$. 

Some notations necessary for the algorithm are as follows: $G_1$ and $G_2$ are the reduced \GB basis of $\GIdeal{G_1}$ in \lex and \grlex ordering respectively. $\LM(G_i)$ is the set of leading monomials of polynomials in $G_i$ for $i\in\{1,2\}$. Since we know $G_1$, we know $\LM(G_1)$, whereas $G_2$ and $\LM(G_2)$ are constructed by the algorithm. $B(G_1)$ is the  set of monomials that cannot be divided (considering the \lex order) by any monomial of $\LM(G_1)$. Therefore, $B(G_1)$ is the set of all multilinear monomials in variables $x_{r+1},\dots,x_n$. Similarly, $B(G_2)$ is the set of monomials that cannot by divided (considering the \grlex order) by any monomial of $\LM(G_2)$. 
Recall the definition of $f_i$ for $i\leq r$ from \cref{sec:GBlex}. For notational purposes, we define the Boolean function $f_i:=x_i$ for $i>r$.

\begin{lemma}\label{lem:q in Boolean Terms}
Consider a monomial $q=x_{i_1}x_{i_2}\cdots x_{i_k}$ where $k\leq d$. Then $\reduce{q}{G_1}$ can be expressed as a linear combination of Boolean terms.
\end{lemma}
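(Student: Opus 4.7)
The plan is to reduce $q$ modulo $G_1$ in two stages. First, I would use the polynomials $R'_1, \ldots, R'_r$ in $G_1$ to eliminate the occurrences of the leading variables $x_1, \ldots, x_r$ in $q$, producing a product of the polynomials $M(f_{i_j})$. Then I would apply the Boolean identity (\ref{eq:mod2expansion}) to rewrite this product as a $\mathbb{Q}$-linear combination of arithmetizations of Boolean terms.

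For the first stage, observe that $R'_{i_j} = x_{i_j} - M(f_{i_j}) \in G_1$ when $i_j \leq r$, so $x_{i_j} \equiv M(f_{i_j}) \pmod{\langle G_1 \rangle}$; for $i_j > r$, we have defined $f_{i_j} := x_{i_j}$, making the same congruence trivially true. Because the linear system is in reduced row echelon form with $x_1, \ldots, x_r$ as leading variables, each set $Supp_i$ (for $i \leq r$) is a subset of $\{r+1, \ldots, n\}$, so each $M(f_{i_j})$ is a polynomial in $x_{r+1}, \ldots, x_n$ only. Multiplying the $k$ congruences gives
$$q \;\equiv\; \prod_{j=1}^k M(f_{i_j}) \pmod{\langle G_1 \rangle}.$$

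For the second stage, (\ref{eq:mod2expansion}) is an identity of Boolean functions $\{0,1\}^{n-r} \to \{0,1\}$:
$$f_{i_1} f_{i_2} \cdots f_{i_k} \;=\; \frac{1}{2^{k-1}}\sum_{\emptyset \neq S \subseteq [k]} (-1)^{|S|-1} \bigoplus_{j \in S} f_{i_j}.$$
Both $\prod_j M(f_{i_j})$ and $P := \frac{1}{2^{k-1}}\sum_{S} (-1)^{|S|-1} M\!\left(\bigoplus_{j \in S} f_{i_j}\right)$ are polynomials in $\mathbb{Q}[x_{r+1}, \ldots, x_n]$ that agree on every Boolean point (both evaluate to $f_{i_1}(a)\cdots f_{i_k}(a)$ at $a\in\{0,1\}^{n-r}$). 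Their difference therefore vanishes on $\{0,1\}^{n-r}$, and hence lies in the vanishing ideal $\langle x_{r+1}^2 - x_{r+1}, \ldots, x_n^2 - x_n \rangle \subseteq \langle G_1 \rangle$. Combined with the first stage, $q \equiv P \pmod{\langle G_1 \rangle}$. Since $P$ is multilinear in $x_{r+1}, \ldots, x_n$ and contains none of $x_1, \ldots, x_r$, no term of $P$ is divisible by any leading monomial of $G_1$; by \cref{th:gbprop}, $P$ is therefore the unique remainder $\reduce{q}{G_1}$, exhibiting it as a $\mathbb{Q}$-linear combination of (arithmetizations of) Boolean terms, as required.

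I expect the main subtlety to be the promotion of the Boolean identity (\ref{eq:mod2expansion}) to a polynomial congruence modulo $\langle G_1 \rangle$; once one recognizes that $\langle x_{r+1}^2 - x_{r+1}, \ldots, x_n^2 - x_n \rangle$ is precisely the vanishing ideal of the Boolean cube and is contained in $\langle G_1 \rangle$, the remainder is direct substitution together with the verification that $P$ is already in normal form with respect to $G_1$.
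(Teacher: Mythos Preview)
Your proof is correct and follows the same route as the paper's: the paper's one-line argument simply asserts $\reduce{q}{G_1}=f_{i_1}\cdots f_{i_k}$ and invokes \cref{eq:mod2expansion}, whereas you spell out why replacing each $x_{i_j}$ by $M(f_{i_j})$ is legitimate, why the Boolean identity lifts to a congruence modulo $\langle G_1\rangle$ (via the vanishing ideal of the Boolean cube), and why the resulting multilinear polynomial in $x_{r+1},\ldots,x_n$ is already the normal form. The added care is welcome, since the paper silently identifies Boolean functions with their multilinear representations; your version makes this identification explicit.
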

\begin{proof}
From \cref{eq:Ri,eq:R'_i}, $\reduce{q}{G_1} = f_{i_1}f_{i_2}\cdots f_{i_k}$ and the lemma holds using \cref{eq:mod2expansion}.
\end{proof}
Let elements $b_i$ of $B(G_2)$ be arranged in increasing \grlex order. We construct a set $A$ in our algorithm such that its elements $a_i$ are defined as $a_i=\reduce{b_i}{G_1}$ written as linear combinations of Boolean terms using \cref{lem:q in Boolean Terms}. 
We say that a Boolean term $f$ of $a_i$ ``appears in $a_j$'' for some $j<i$ if the longest Boolean term of $a_j$ is $f\oplus\alpha$ where $\alpha=0/1$. 

Let $Q$ be the set of all monomials $m$ such that $1<_\grlex deg(m) \leq_\grlex d$. We recommend the reader to refer to the example in \cref{sec:example}
for an intuitive working of the algorithm. The conversion is described in full in \cref{alg:conversion} (we assume $1 \notin\I_\Cc$, else $G_1=\{1\}=G_2$ and we are done).

\begin{algorithm}[ht]
\DontPrintSemicolon
\SetKwInput{KwInput}{Input}              
\SetKwInput{KwOutput}{Output} 
\SetKwInput{KwInit}{Initialization} 

\KwInput{Degree $d$, $G_1$, $Q$.}
\KwOutput{$d$-Truncated versions of $G_2$, $B(G_2)$.}
\KwInit{$G_2= \emptyset$, $B(G_2)=A=\{1\}$, $a_1=b_1=1$.}

\While{$Q\neq \emptyset$ }{
Let $q$ be the smallest (according to \grlex order) monomial in $Q$.\;
    Find ${q}|_{G_1}$.\; 
    Expand ${q}|_{G_1}$ using \cref{eq:mod2expansion}. 
    
    \If{the longest Boolean term of ${q}|_{G_1}$ does not appear in any $a\in A$}{
    Write ${q}|_{G_1}$ as a linear combination of $\reduce{b_i}{G_1}$ and its longest Boolean term (see \cref{lem:C}).\; Add this polynomial to $A$ and add $q$ to $B(G_2)$.\; 
    }
         
     \Else{
     Every Boolean term of ${q}|_{G_1}$ can be written as linear combinations of $\reduce{b_j}{G_1}$'s. Note that if the longest Boolean term $f$ appears in $a$ as $f\oplus 1$, then we use $f\oplus 1 = 1- f$ (see \cref{eq:mod2arith}). Thus we have ${q}|_{G_1}=\sum_j k_j\reduce{b_j}{G_1} \implies q-\sum_j k_jb_j \in \GIdeal{G_1}$.\; 
     Add the polynomial $q-\sum_j k_jb_j$ to $G_2$ and $q$ to $\LM(G_2)$. \;
     Delete any monomial in $Q$ that $q$ can divide.
     }   

     Delete $q$ from $Q$. 
}
$G_2$ is the $d$-truncated reduced \GB basis.
\caption{Computing the $d$-truncated reduced \GB basis}\label{alg:conversion}
\end{algorithm}

\begin{lemma}\label{lem:C}
The set $A$ is such that every $a_i$ is a linear combination of existing $\reduce{b_j}{G_1}$'s ($j<i$) and the longest Boolean term of $\reduce{b_i}{G_1}$.
\end{lemma}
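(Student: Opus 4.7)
My plan is to prove the lemma by strong induction on $i$, the order in which elements are added to $A$ during the execution of \cref{alg:conversion}. The base case $i=1$ is handled by initialization: $a_1 = b_1 = 1$, which trivially fits the statement as an empty linear combination of earlier $\reduce{b_j}{G_1}$'s, and I will later use $1 = \reduce{b_1}{G_1}$ to reabsorb constants that arise when switching between a Boolean term and its complement.

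For the inductive step I would take $b_i = x_{i_1} x_{i_2} \cdots x_{i_k}$ being added to $B(G_2)$ and consider $a_i = \reduce{b_i}{G_1} = f_{i_1} f_{i_2} \cdots f_{i_k}$, where $f_j$ is defined in \cref{sec:GBlex} (with $f_j := x_j$ for $j>r$). By \cref{lem:q in Boolean Terms} together with \cref{eq:mod2expansion}, $a_i$ admits a fixed rational linear decomposition into Boolean terms $T_S := \bigoplus_{j \in S} f_j$ indexed by the nonempty subsets $S \subseteq \{i_1,\ldots,i_k\}$, with the unique \emph{longest} term corresponding to $S = \{i_1,\ldots,i_k\}$. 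To obtain the statement of the lemma it then suffices to show that, for every proper subset $S \subsetneq \{i_1,\ldots,i_k\}$, the Boolean term $T_S$ can be rewritten as a $\mathbb{Q}$-linear combination of $\reduce{b_l}{G_1}$'s with $l < i$; substituting these expressions into the decomposition of $a_i$ leaves only the longest Boolean term and gives the required form.

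The crux of the argument is the following observation, which I would establish by invoking the inductive hypothesis. Fix a proper subset $S \subsetneq \{i_1,\ldots,i_k\}$ and consider the monomial $q_S := \prod_{j \in S} x_j$. Since $q_S$ strictly divides $b_i$ we have $q_S <_\grlex b_i$, so $q_S$ was processed by the algorithm before $b_i$, and its longest Boolean term is precisely $T_S$. Two cases arise from the branches of \cref{alg:conversion}: either $q_S$ was added to $B(G_2)$, in which case $q_S = b_k$ for some $k < i$ and $a_k$ carries $T_S$ itself as its longest Boolean term; or the ``else'' branch fired, meaning by the guard that $T_S$ (possibly shifted as $T_S \oplus 1$) already appeared as the longest Boolean term of some earlier $a_k$ with $k < i$. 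In either case, the inductive hypothesis applied to $a_k$ expresses that longest Boolean term as $a_k$ minus a $\mathbb{Q}$-linear combination of $\reduce{b_l}{G_1}$ with $l < k$. Using the identity $f \oplus 1 = 1 - f$ from \cref{eq:mod2arith}, and absorbing the constant via $1 = \reduce{b_1}{G_1}$, I can recover $T_S$ itself as a $\mathbb{Q}$-linear combination of $\reduce{b_l}{G_1}$'s with $l \le k < i$.

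The main obstacle I anticipate is the bookkeeping across the two branches of the algorithm: I must argue that \emph{no matter which branch $q_S$ triggered when it was processed}, the Boolean term $T_S$ is guaranteed to surface (exactly or up to a $\oplus 1$) as the longest Boolean term of some $a_k$ that already lives in $A$. This is forced by the semantics of the conditional ``the longest Boolean term of ${q}|_{G_1}$ does not appear in any $a\in A$'': if $q_S$ entered $B(G_2)$ the condition held at that moment, so $a_k := \reduce{q_S}{G_1}$ contributes $T_S$; otherwise the condition failed precisely because $T_S$ (up to $\oplus 1$) was already present. Once this invariant is pinned down, substituting the derived expressions for each non-longest $T_S$ into the decomposition of $a_i$ yields the stated form and closes the induction.
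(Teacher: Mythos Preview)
Your approach is essentially the paper's: induct, expand $\reduce{b_i}{G_1}$ into Boolean terms via \cref{eq:mod2expansion}, and for each non-longest term $T_S$ pass to the sub-monomial $q_S=\prod_{j\in S}x_j$ to locate an earlier $a_k$ whose longest Boolean term is $T_S$ (up to $\oplus 1$), then solve for $T_S$ using the inductive hypothesis.

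There is, however, a genuine gap in your case split. You assert that ``$q_S$ was processed by the algorithm before $b_i$'' simply because $q_S<_\grlex b_i$, and then branch on whether the \texttt{if} or the \texttt{else} fired. But the algorithm also \emph{deletes} monomials from $Q$ (the line ``Delete any monomial in $Q$ that $q$ can divide''), so $q_S$ may have been removed without ever being processed; your two cases are not exhaustive. The paper closes this cleanly with a divisibility argument you omit: since $b_i$ was added to $B(G_2)$, no element of $\LM(G_2)$ divides $b_i$; as $q_S\mid b_i$, no element of $\LM(G_2)$ divides $q_S$ either, so $q_S$ could neither have been deleted nor have triggered the \texttt{else} branch, and hence $q_S=b_l\in B(G_2)$ for some $l<i$. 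In particular your second case (\texttt{else} fired for $q_S$) is vacuous, and the missing third case is ruled out by the same observation. Once you insert this one-line divisibility argument, your proof collapses to exactly the paper's.
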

\begin{proof}
By definition, element $a_i$ is added to $A$ when a monomial $q$ is added to $B(G_2)$ where $b_i=q$ and $a_i=\reduce{b_i}{G_1}$ expressed in Boolean terms (see \cref{alg:conversion}). 
This means that $q$ is not divisible by any monomial in $\LM(G_2)$. We prove the lemma by induction on the degree of $q$. Note that $b_1 = 1$ and hence $a_1 = \reduce{b_1}{G_1} = 1$.

If $deg(q)=1$, then $q$ is some $x_i$ and $\reduce{x_i}{G_1}$ is one of $0,1$ or $f_i$. If $\reduce{x_i}{G_1}$ is either 0 or 1, then it appears in $a_1$. We are now in the `else' condition of \cref{alg:conversion}, so $q$ should be added to $\LM(G_2)$ and not $B(G_2)$. Hence $\reduce{x_i}{G_1}$ can be neither 0 nor 1 and the lemma holds for $deg(q)=1$ as $f_i$ is the longest Boolean term.

Let us assume the statement holds true for all monomials with degree less than $m$. Consider $q$ such that $deg(q)=m$ and $q=x_{i_1}x_{i_2}\dots x_{i_m}$ where $i_j$'s need not be distinct, and the lemma holds for every monomial $<_{\grlex} q$. Then $\reduce{q}{G_1} = f_{i_1}\cdot f_{i_2}\cdots f_{i_m}$.
Let $(f_{j_1}\oplus \cdots \oplus f_{j_k})$ be a Boolean term in the expansion of $\reduce{q}{G_1}$ (by using \cref{eq:mod2expansion}), that is not the longest Boolean term, so $\{j_1,\dots,j_k\}\subset \{i_1,\dots,i_m\}$ and $k<m$. 
Consider the monomial $x_{j_1}x_{j_2}\dots x_{j_k}$. We will now prove that $x_{j_1}x_{j_2}\dots x_{j_k}$ is in fact some $b_l\in B(G_2)$ and there exists $a_l\in A$ which is a linear combination of $\reduce{b_i}{G_1}$'s and $(f_{j_1}\oplus \cdots \oplus f_{j_k})$.
The monomial $x_{j_1}x_{j_2}\dots x_{j_k}$ either belongs to $LM(G_2)$ or $B(G_2)$.
If $x_{j_1}x_{j_2}\dots x_{j_k}\in\LM(G_2)$ then it divides $q$, a contradiction to our choice of $q$. Therefore, $x_{j_1}x_{j_2}\dots x_{j_k}=b_l\in B(G_2)$.
Clearly $b_l<_{\grlex}q$ and the induction hypothesis applies, so there exists $a_l\in A$ such that

\begin{equation*}
    \reduce{b_l}{G_1}=a_l=\sum_{i < l} c_i\reduce{b_i}{G_1} + c_0(f_{j_1}\oplus \cdots \oplus f_{j_k})
\end{equation*}

where $c_i$'s are constants.
Then we simply use the above equation to substitute for the Boolean term $f_{j_1}\oplus \cdots \oplus f_{j_k}$ in $\reduce{q}{G_1}$ as a linear combination of $\reduce{b_i}{G_1}$ where $i\leq l$. We can do this for every Boolean term of $\reduce{q}{G_1}$ except the longest one. Hence, the lemma holds.
\end{proof}

\begin{theorem}\label{thm:correctness_algorithm}
Let $\Gamma$ be a constraint language over the Boolean domain that is closed under the minority polymorphism. Consider any instance $\mathcal{C}=(X,D=\{0,1\},C)$ of $\CSP(\Gamma)$ and let $G_1$ be the reduced \GB basis of $\I_\Cc$ in lexicographic order.
The conversion algorithm terminates for every input $G_1$ and correctly computes a $d$-truncated reduced \GB basis, with the \grlex ordering, of the ideal $\GIdeal{G_1}$ in polynomial time.
\end{theorem}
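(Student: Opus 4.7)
The plan is to verify three properties of \cref{alg:conversion}: termination, correctness as a $d$-truncated reduced \GB basis, and polynomial running time. Throughout, I view the algorithm as a variant of FGLM in which the linear-independence test that is normally performed by Gaussian elimination on vectors of coefficients of reductions is replaced by a structural test on Boolean terms, and this is what allows us to avoid the exponential blowup discussed at the start of \cref{sec:OurAlgorithm}.

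Termination is immediate: $Q$ contains only $O(n^d)$ monomials and one element is deleted per iteration. For correctness, I would first verify that every polynomial added to $G_2$ lies in $\langle G_1\rangle$: each such polynomial has the form $q - \sum_j k_j b_j$ where, by construction, $q|_{G_1} = \sum_j k_j\, b_j|_{G_1}$ as polynomials in $\mathbb{Q}[X]$; since $|_{G_1}$ is $\mathbb{Q}$-linear, this yields $(q - \sum_j k_j b_j)|_{G_1} = 0$, and by \cref{th:imp} the polynomial lies in $\langle G_1 \rangle$. I would then maintain as a loop invariant that the monomials already processed are partitioned into $B(G_2)$ (not divisible by any element of $\LM(G_2)$) and the set of multiples of $\LM(G_2)$. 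The decisive step combines \cref{lem:q in Boolean Terms} and \cref{lem:C}: when the longest Boolean term of $q|_{G_1}$ already appears in some $a_l \in A$, every Boolean term in the expansion of $q|_{G_1}$ can be substituted recursively by a linear combination of $b_i|_{G_1}$'s through smaller $a_i$'s, producing a witness $q - \sum_j k_j b_j \in \langle G_1 \rangle$ with leading monomial $q$; when it does not appear, the longest Boolean term itself certifies that $q|_{G_1}$ is $\mathbb{Q}$-linearly independent from the reductions already recorded, so no such relation exists and $q$ must enter $B(G_2)$. This gives $\langle \LM(G_2)\rangle = \langle \LM(\langle G_1\rangle \cap \mathbb{Q}[X]_d)\rangle$, so $G_2$ is a $d$-truncated \GB basis. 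Reducedness then follows because each $q - \sum_j k_j b_j$ has leading coefficient $1$, and its non-leading monomials lie in $B(G_2)$, which by construction are never in $\langle \LM(G_2)\rangle$; later additions cannot spoil this since any new element of $\LM(G_2)$ is \grlex-larger than all previously processed monomials and hence cannot divide them.

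For the running time, the outer loop runs $O(n^d)$ times. Within an iteration, the expansion $q|_{G_1} = f_{i_1}\cdots f_{i_k}$ via \cref{eq:mod2expansion} with $k\le d$ produces $2^{O(d)}=O(1)$ Boolean terms of polynomial size, checking whether the longest such term appears in $A$ takes time polynomial in $|A|\le|B(G_2)|=O(n^d)$, and the recursive substitution required by \cref{lem:C} to express $q|_{G_1}$ as a combination of $b_j|_{G_1}$'s unfolds in polynomially many steps because every substitution strictly decreases the \grlex order of the subterm being replaced. The main obstacle, in my view, is the clean formulation of \cref{lem:C} used inside the argument: one must argue that for every shorter Boolean term $f_{j_1}\oplus\cdots\oplus f_{j_k}$ appearing in the expansion of $q|_{G_1}$, the corresponding monomial $x_{j_1}\cdots x_{j_k}$ sits in $B(G_2)$ rather than $\LM(G_2)$. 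If it were in $\LM(G_2)$ it would divide $q$, contradicting the fact that $q$ was not pruned from $Q$ by an earlier iteration. This interplay between the \grlex processing order and the pruning step is precisely what makes the induction go through, and I would isolate it as the central technical point of the proof.
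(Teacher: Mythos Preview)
Your proposal is correct and follows essentially the same approach as the paper's proof: termination by the finiteness of $Q$, correctness via \cref{lem:C} and the observation that a ``new'' longest Boolean term witnesses that $q|_{G_1}$ cannot be a $\mathbb{Q}$-linear combination of the previously recorded $b_i|_{G_1}$'s, reducedness because non-leading monomials always come from $B(G_2)$, and the $n^{O(d)}$ running-time bound from $|Q|=O(n^d)$ iterations each of polynomial cost. The only cosmetic difference is that the paper phrases the correctness step as a proof by contradiction (assume a degree-$\le d$ ideal element $g$ with $\LM(g)$ missed by $\LM(G_2)$, reduce by $G_2$, then derive that the longest Boolean term of $\LM(g)|_{G_1}$ must already occur in some $a_l$), whereas you argue the contrapositive directly; the content is identical, and in particular your closing paragraph isolating the divisibility argument for \cref{lem:C} matches the paper's own proof of that lemma.
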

\begin{proof}
\cref{alg:conversion} runs at most $|Q| = O(n^d)$ times. Evaluation of any $\reduce{q}{G_1}$ can be done in $O(n)$ steps (see \cref{eq:mod2expansion}), checking if previous $a_i$'s appear (and replacing every Boolean term appropriately if it does) takes at most $O(n^d)$ steps since there are at most $|Q|$ many elements in $A$. Hence, the running time of the algorithm is $O(n^{2d})$.

Suppose the set of polynomials $\{g_1, g_2, \dots, g_k\}$ is the output of the algorithm for some input $G_1$. Clearly, $deg(g_i)\leq d$ for all $i\in [k]$. We now prove by contradiction that the output is the $d$-truncated \GB basis of the ideal $\GIdeal{G_1}$ with the \grlex ordering.  Suppose $g$ is a polynomial of the ideal with $deg(g)\leq d$, but no $\LM(g_i)$ can divide $\LM(g)$. In fact, since every $g_i\in \GIdeal{G_1}$ we can replace $g$ by $\reduce{g}{\{g_1, g_2, \dots, g_k\}}$ ($g$ generalises the reduced $S$-polynomial). The fact that $g\in\GIdeal{G_1}$ and $\reduce{g}{G_1}=0$ implies that $\LM(g)$ is a linear combination of monomials that are less than $\LM(g)$ (in the \grlex order) and hence must be in $B(G_2)$, i.e. 
\begin{equation*}
    \reduce{g}{G_1}=0 \implies \reduce{\LM(g)}{G_1}=\sum_i k_i\reduce{b_i}{G_1},
\end{equation*}
where every $b_i\in B(G_2)$ and $b_i<_\grlex \LM(g)$. When the algorithm runs for $q=\LM(g)$, since $q$ was not added to $\LM(G_2)$,
\begin{equation*}
    \reduce{\LM(g)}{G_1}= \sum_j k_j\reduce{b_j}{G_1} + f,
\end{equation*}
where $f$ is the longest Boolean term of $\reduce{\LM(g)}{G_1}$ which does not appear in any previous element of $A$. But the two equations above imply that $\sum_i k_i\reduce{b_i}{G_1} = \sum_j k_j\reduce{b_j}{G_1} + f$, which proves that there exists some $b_l\in B(G_2)$ such that $a_l$ has $f$ as its longest Boolean term, so $f$ should have appeared in $a_l$, a contradiction. Therefore the output is a $d$-truncated \GB basis. Although unnecessary for the $\IMP_d$, we also prove that the output is reduced: every non leading monomial of every polynomial in the output comes from $B(G_2)$ and no leading monomial is a multiple of another by construction.
\end{proof}
Thus we have a proof of \cref{thm:MainTheorem} and \cref{thm:Corollary_minority}. 

\section{Dual discriminator}\label{sect:ddiscriminator}
In this section we focus on instances $\Cc=(X,D,C)$ of $\CSP(\Gamma)$ where $\Gamma$ is a language that is closed under the
dual discriminator polymorphism and provide the proof of~\cref{thm:main_dualdisc}.
To this end, let us first provide some definitions and preliminaries. 

We assume $D\subset \mathbb{Q}$ is any finite domain and the polymorphism in question is the dual discriminator operation $\nabla$ that preserves $\Gamma$ (see \cref{def:polymorph}). The dual discriminator is a majority operation \cite{Jeavons:1997:CPC,barto_et_al:DFU:2017:6959} and is often used as a starting point in many CSP-related classifications \cite{barto_et_al:DFU:2017:6959}.
For a finite domain $D$, a ternary operation $f$ is called a majority operation if  $f(a,a,b)=f(a,b,a)=f(b,a,a)=a$ for all $a,b\in D$. 
\begin{definition}\label{def:dual discriminator}
  The dual discriminator on a domain $D$, denoted by $\nabla$, is a majority operation such that $\nabla(a,b,c)=a$ for pairwise distinct $a,b,c\in D$.
\end{definition}
%
%
Observe that both $R_1$ and $R_2$ from \cref{example:CSP} are $\nabla$-closed.
The well-known 2-SAT problem is another example of a CSP whose language is closed under the Boolean majority polymorphism (note that the majority operation on a Boolean set is also a dual discriminator operation).

In the remainder we will assume that the solution set is non-empty, as this can be verified in polynomial time using the Arc-Consistency (AC) algorithm \cite{MACKWORTH1977}.

The constraints for $\nabla$-closed problems can be assumed to be binary \cite{Jeavons:1997:CPC,Pixley1975} and are of three types: permutation constraints, complete constraints and two-fan constraints \cite{szendrei,Jeavons1994_Characterising_tract_constraints}. Bulatov and Rafiey \cite{bulatov2020complexity} proved that the $\IMP_d(\Gamma)$ over a finite domain is decidable in polynomial time, without showing a proof of membership. They did so by cleverly eliminating the permutation constraints, but were unable to recover a proof for the original problem. We show that a \GB basis of the ideal restricted to the permutation constraints can be computed in polynomial time in \cref{sec:permconstraints}. We then show in \cref{sec:completeandtwofan} that the \GB basis of constraints that are complete and two-fan constraints can come from a fixed set (see \cref{def:DFL}). 
We prove in \cref{sec:GB of entire ideal} that the \GB basis of the entire ideal can be found in polynomial time. This \GB basis is independent of degree $d$ of the input polynomial: it only contains polynomials with degree less than or equal to $|D|!$. 

\subsection{Permutation constraints}\label{sec:permconstraints}
For $i,j\in [n]$, a permutation constraint is of the form $R(x_i,x_j)$, where $R=\{ (a,\pi_{ij}(a))\mid a\in D_{ij}\}$ for some $D_{ij}\subseteq D$ and some bijection $\pi_{ij}:D_{ij}\to D_{ij}'$, where $D_{ij}'\subseteq D$. Clearly $|D_{ij}|=|D_{ij}'|$. Let $\ca{P}$ be the set of input permutation constraints.
We can assume that there exists at most one permutation constraint over every pair of variables: if there are two on the same set of variables, then their intersection is a permutation constraint. 
Let $R_{ij}(x_i,x_j)$ represent the unique permutation constraint on variables $x_i,x_j$, if one exists.

Informally, the goal is to make larger constraints called \textit{chained permutation constraints} (CPC's). 
Permutation constraints on overlapping variables can be linked to form a larger constraint by using bijections. 
For example, if there exists $R_{ij}(x_i,x_j),R_{jk}(x_j,x_k)\in\ca{P}$, we form a new constraint on $x_i,x_j,x_k$ by using $\pi_{ij}$ and $\pi_{jk}$: the chained permutation constraint is $R(x_i,x_j,x_k)$ where 

\begin{equation*}
    R=\{(\pi_{ij}^{-1}(a),a,\pi_{jk}(a))\mid a\in D'_{ij}\cap D_{jk}\}.
\end{equation*}

The number of tuples in any CPC is always less than or equal to the domain size, since there is always a bijection between any two variables of a CPC. On the other side the arity of the chained permutation constraint can be as large as $n$.
The constructing of CPC's can be carried out by the arc consistency algorithm described in \cite{MACKWORTH1977} and 
\cref{alg:BuildingCPC} is tailored to our application.
A brief working of \cref{alg:BuildingCPC} is as follows: let $J\subset [n]$ be an index set for the CPC's (it becomes clear later why there can be at most $\lfloor n/2 \rfloor$ of them but we use $n$ for convenience). 
We initialise $J=\emptyset$. 
A general chained permutation constraint $\CPC_i$ is defined as $R_i(X_i)$ where $R_i$ is a relation and $X_i\subseteq X$ is a variable set. 
We keep track of the values that each variable is allowed to take, that is, $S_a\subseteq D$ is the set of solutions of $x_a$ that satisfies $\CPC_i$ for all $x_a\in X_{i}$. 
The sets $S_a$ and $X_{i}$ are updated as $\CPC_i$ grows. We define $\sigma_{ab}:S_a\to S_b$ to be the bijection between any two pairs of variables $x_a,x_b\in X_{i}$. Note that $|S_a|=|S_b|$ because the possible values of $x_a$ and $x_b$ are in bijection whenever they belong to some $X_i$. Hence $\sigma_{ba}=\sigma_{ab}^{-1}$. Let $\sigma_{aa}$ denote the identity function for all $a\in [n]$. For any permutation constraint $R_{pq}$ in $\ca{P}$, one of the four is true.
\begin{itemize}
    \item Neither $x_p$ nor $x_q$ belong to $\cup_{j\in J} X_j$: in which case we create a $\CPC$. We define $\CPC_i=R_{pq}(x_p,x_q)$ and $X_{i}=\{x_p,x_q\}$ where $i\in [n]\setminus J$. Update $J:=J\cup \{i\}$.
    \item $x_p\in X_i$ for some $i\in J$ and $x_q\notin \cup_{j\in J}X_{j}$, in which case we expand $\CPC_i$ to include $R_{pq}$ and $x_q$ is included in $X_i$.
    \item $x_p\in X_i$, $x_q\in X_j$ and $i\neq j$, in which case $\CPC_i$ and $\CPC_j$ have a permutation constraint linking two of their variables, so we combine the two CPC's into one. The set $X_i\cup X_j$ is the new $X_i$ and $j$ is deleted from $J$.
    \item Both $x_p,x_q\in X_{i}$, in which case we update $\CPC_i$ to retain only the common solutions between $\CPC_i$ and $R_{pq}$.
\end{itemize}
As $R_{pq}(x_p,x_q)$ is now accounted for in some $\CPC$, it is deleted from $\mathcal{P}$. The algorithm runs until $\ca{P}$ is empty. Once $\ca{P}$ is empty, for each $i\in J$, $\CPC_i$ is defined as
 \begin{equation*}
     \CPC_i:=R_i(X_{i}=\{x_{i_1},x_{i_2},\ldots,x_{i_r}\}),
 \end{equation*}
where $R_i$ is the $r$-ary relation 
  $R_i = \{(a,\sigma_{i_1i_2}(a),\ldots,\sigma_{i_1i_r}(a))\mid a\in S_{i_1}\}$.
The following lemma is straightforward because of the way each CPC is constructed.
\begin{lemma}\label{rem:X_CPCi are independent}
For $i\neq j$ and $i,j\in J$, $X_{i}\cap X_{j}=\emptyset$.
\end{lemma}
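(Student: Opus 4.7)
The plan is to prove this by establishing a loop invariant for Algorithm~\ref{alg:BuildingCPC}: at the end of every iteration, the sets $\{X_i : i \in J\}$ are pairwise disjoint. Since the algorithm terminates after processing every constraint in $\ca{P}$, and the $\CPC_i$'s referenced in the lemma statement are exactly the ones indexed by the final $J$, the invariant immediately yields the claim.

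The base case is trivial: initially $J = \emptyset$, so the invariant holds vacuously. For the inductive step, I would assume the invariant holds at the beginning of an iteration and then verify it is preserved by the processing of an arbitrary permutation constraint $R_{pq}(x_p,x_q)$ by going through each of the four cases listed in the description of the algorithm. In Case~1 (neither $x_p$ nor $x_q$ lies in any existing $X_j$), a fresh CPC is created with $X_i := \{x_p, x_q\}$ for a new index $i \notin J$; by hypothesis $\{x_p,x_q\}$ is disjoint from every existing $X_j$, so disjointness is preserved. In Case~2, only the single set $X_i$ containing $x_p$ is enlarged by the new variable $x_q$, and since $x_q$ lay in no existing $X_j$, no two sets acquire a common element. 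Case~4 leaves every variable set unchanged, modifying only the relation $R_i$.

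The only case that requires a brief second look is Case~3, where $x_p \in X_i$ and $x_q \in X_j$ with $i \neq j$: here the algorithm replaces the two variable sets by their union $X_i \cup X_j$ and deletes $j$ from $J$. The key observation is that removing $j$ is essential: it is exactly what prevents two distinct remaining indices from sharing variables after the merge. Every other $X_k$ with $k \in J \setminus \{i,j\}$ was, by the inductive hypothesis, disjoint from both $X_i$ and $X_j$, and hence disjoint from their union.

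There is no real obstacle in this proof — it is essentially bookkeeping, and the substance lies entirely in the fact that Algorithm~\ref{alg:BuildingCPC} was \emph{designed} to preserve this disjointness property. The only pitfall worth highlighting in the write-up is to be explicit about Case~3, since it is the only case in which the family of indices $J$ itself changes, and one might otherwise worry that a merge could accidentally leave two sets overlapping.
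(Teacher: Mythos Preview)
Your proposal is correct and is essentially a fleshed-out version of the paper's approach: the paper simply states that the lemma ``is straightforward because of the way each CPC is constructed'' and gives no further proof, whereas you make the implicit loop-invariant argument explicit by checking each of the four cases of Algorithm~\ref{alg:BuildingCPC}.
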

%
%
Note that the transitive closure of the permutation constraints does not change $Sol(\Cc)$, so it does not change the ideal $\I_\Cc$, since~$\I_\Cc=\Ideal{Sol(\Cc)}$ (see \cref{sec:IMPdef}).
\begin{lemma}\label{lem:GB of I_CPC_i}
Let $I_{\CPC_i}$ be the combinatorial ideal associated with $\CPC_i$. A \GB basis of $\sum_{i\in J} I_{\CPC_i}$ can be calculated in polynomial time.
\end{lemma}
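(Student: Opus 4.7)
The plan is to exploit the disjointness of the variable sets $X_i$ established in Lemma~\ref{rem:X_CPCi are independent} to decouple the problem: if $G_i$ is a \GB basis of $I_{\CPC_i}$ in any fixed monomial order, then $G := \bigcup_{i \in J} G_i$ is a \GB basis of $\sum_{i \in J} I_{\CPC_i}$ in that same order. Indeed, for $g \in G_i$ and $g' \in G_j$ with $i \neq j$, the leading monomials involve variables from the disjoint sets $X_i$ and $X_j$ respectively, so they are relatively prime and Proposition~\ref{th:rel_prime} forces $S(g,g')|_G = 0$; within a single $G_i$ the S-polynomials reduce to zero by hypothesis. Buchberger's criterion (Theorem~\ref{th:crit}) then certifies $G$.

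So it suffices to compute, for each $i \in J$, a \GB basis $G_i$ of $I_{\CPC_i}$ in polynomial time. The key structural observation is that $\CPC_i$ has solution set
\[
R_i = \{(a, \sigma_{i_1 i_2}(a), \ldots, \sigma_{i_1 i_r}(a)) \mid a \in S_{i_1}\},
\]
which is entirely parameterized by the single variable $x_{i_1}$ ranging over $S_{i_1} \subseteq D$. I would use Lagrange interpolation to produce, for each $j = 2,\ldots,r$, a univariate polynomial $p_{i_j}(x_{i_1}) \in \mathbb{Q}[x_{i_1}]$ of degree at most $|S_{i_1}|-1 \leq |D|-1$ satisfying $p_{i_j}(a) = \sigma_{i_1 i_j}(a)$ for every $a \in S_{i_1}$. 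Setting
\[
g_0 := \prod_{a \in S_{i_1}}(x_{i_1} - a), \qquad g_j := x_{i_j} - p_{i_j}(x_{i_1}) \quad (j = 2,\ldots,r),
\]
a direct back-substitution argument using the triangular structure shows that $G_i := \{g_0,g_2,\ldots,g_r\}$ generates the vanishing ideal $I_{\CPC_i}$, and each element is computed in $O(|D|^2)$ arithmetic operations.

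To verify $G_i$ is a \GB basis, order the variables in $X_i$ by the lex rule $x_{i_r} >_\lexns \cdots >_\lexns x_{i_2} >_\lexns x_{i_1}$. Then $\LT(g_j) = x_{i_j}$ for $j \geq 2$ (since in lex any monomial containing $x_{i_j}$ dominates every pure power of the lex-smaller $x_{i_1}$) and $\LT(g_0) = x_{i_1}^{|S_{i_1}|}$. These leading monomials lie in pairwise disjoint sets of variables and are therefore pairwise relatively prime, so Proposition~\ref{th:rel_prime} together with Buchberger's criterion yield that $G_i$ is a \GB basis in this lex order. Summing over $i$, the total running time is polynomial in $n$ since each CPC contributes $O(|X_i|\cdot |D|^2)$ work and $\sum_i |X_i| \leq n$.

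The main obstacle I anticipate is reconciling the above with the \grlex order required in Theorem~\ref{thm:main_dualdisc}: the basis constructed above need not remain a \GB basis under \grlex when some $p_{i_j}$ has degree strictly larger than one, because then the top monomial of $p_{i_j}(x_{i_1})$ beats $x_{i_j}$ in total degree and the relatively-prime structure breaks down. However, this is only a local-per-CPC issue, and can be resolved by invoking FGLM conversion on each $I_{\CPC_i}$ independently: since $I_{\CPC_i}$ is zero-dimensional with quotient-ring dimension $|S_{i_1}| \leq |D|$, FGLM converts the lex basis into a \grlex basis in $O(|X_i|\cdot |D|^3)$ arithmetic operations, which remains polynomial in $n$ for the constant $|D|$. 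Critically, unlike the situation in Section~\ref{sec:OurAlgorithm}, the quotient dimension here is small, so the exponential cost warned about in that section does not apply. The resulting \grlex bases still sit inside the disjoint variable sets $X_i$, so the disjointness argument from the first paragraph carries through unchanged to produce a \grlex \GB basis of $\sum_{i \in J} I_{\CPC_i}$.
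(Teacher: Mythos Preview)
Your argument is correct. Both the triangular lex basis $\{g_0,g_2,\ldots,g_r\}$ and the per-CPC FGLM conversion are sound: the ideal $\langle G_i\rangle$ is radical (the quotient is isomorphic to $\mathbb{Q}[x_{i_1}]/\langle g_0\rangle$ with $g_0$ squarefree) and has the right variety, so it equals $I_{\CPC_i}$; and since the quotient dimension is $|S_{i_1}|\le |D|=O(1)$, FGLM runs in $O(|X_i|\cdot |D|^3)$ operations and outputs at most $O(|X_i|\cdot|D|)$ basis elements of degree $\le |D|$. The disjointness argument for the union is exactly as in the paper.

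The paper takes a different route inside each CPC. Rather than interpolating everything against a single base variable and then calling FGLM, it first observes that $R_i$ has at most $|D|!$ \emph{distinct} columns, so the variables in $X_i$ fall into at most $|D|!$ equivalence classes under ``$\sigma_{jk}$ is the identity''. It peels these off as linear binomials $x_j-x_k$ (the set $\mathcal{S}_i$), leaving a residual problem on $\le |D|!$ variables, on which it simply runs Buchberger in $O(1)$ time. The final basis is $\mathcal{G}_i=G_i\cup\mathcal{S}_i$, with the two parts glued by relatively prime leading monomials.

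What each approach buys: yours is shorter and uses only off-the-shelf tools (Lagrange interpolation, FGLM) with a clean complexity count. The paper's decomposition is slightly more structural---it makes explicit that almost all of $\mathcal{G}_i$ consists of degree-one binomials $x_j-x_k$, with only an $O(1)$-sized ``core'' of higher-degree polynomials. That explicit shape is convenient later in \S\ref{sec:GB of entire ideal}, where the interaction of $\mathcal{G}_i$ with the complete/two-fan polynomials is analysed case-by-case, and it is also what justifies the degree bound $|D|!$ in Theorem~\ref{thm:main_dualdisc}. Your FGLM output would serve equally well there, but the paper's form makes the subsequent S-polynomial bookkeeping a bit more transparent.
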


\begin{proof}
Suppose that we see a relation as a matrix where each tuple is a row. Then the tuple size is the number of columns. For $i\in J$, the relation $R_i$ in $\CPC_i=R_i(X_{i})$ is such that it can have at most $|D|!$ pairwise distinct columns as there exists a bijection between every pair of variables in $X_{i}$. If the columns corresponding to $x_j$ and $x_k$ are the same, for some $x_j,x_k\in X_i$, then $x_j-x_k\in I_{\CPC_i}$. 
Consider the set 
\begin{equation*}
    \{x_j-x_k\mid x_j>x_k \textrm{ and } \sigma_{jk} \textrm{ is the identity function } \forall x_j,x_k\in X_i\}.
\end{equation*}
Let $\ca{S}_i$ be the reduced \GB basis of the ideal generated by this set. It is not difficult to see that polynomials in $\ca{S}_i$ are of the form $x_j-x_k$.
Let $M_i$ be the set of leading monomials of the polynomials in $\mathcal{S}_i$. Consider the variables in $X_{i}\setminus M_i$.
Since there is a bijection $\sigma_{jk}$ between every pair of variables $x_j,x_k\in X_{i}\setminus M_i$, we can produce a pair of interpolating polynomials $x_j-f(x_k)$ and $x_k-g(x_j)$ such that $x_j=a,x_k=\sigma_{jk}(a)$ satisfies both polynomials for every $a\in S_j$.
The degree of $f$ and $g$ is at most $|D|-1$ since $|D|$ is the domain size and there are at most $|D|$ tuples in $R_i$. Let $\mathcal{T}_i$ be this set of interpolating polynomials for every pair of variables in $X_{i}\setminus M_i$. Then \begin{equation}\label{eq:I_CPCi}
    I_{\CPC_i} = \GIdeal{\mathcal{T}_i}+ \GIdeal{\{\Pi_{a\in S_j} (x_j-a)\mid x_j\in X_{i}\setminus M_i\}}+\GIdeal{\mathcal{S}_i}.
\end{equation} 
Since $|X_{i}\setminus M_i|\leq |D|!=O(1)$ and the degree of every polynomial in \cref{eq:I_CPCi} is less than or equal to $|D|$, a \GB basis of $\GIdeal{\mathcal{T}_i}+ \GIdeal{\{\Pi_{a\in S_j} (x_j-a)\mid x_j\in X_{i}\setminus M_i\}}$ can be computed by the Buchberger's algorithm in time that is at most exponential in the number of variables, i.e.~in $O(1)$ time for any $|D|=O(1)$. 
Suppose $G_i$ is the reduced \GB basis. Then $\ca{G}_i=G_i\cup \ca{S}_i$ is the reduced \GB basis of $I_{\CPC_i}$: for every pair of polynomials $f\in G_i$ and $g\in\ca{S}_i$, $\reduce{S(f,g)}{\{f,g\}}=0$ as $\LM(f)$ and $\LM(g)$ are relatively prime (see \cref{th:rel_prime}). 
Thus the reduced \GB basis of $I_{\CPC_i}$ can be calculated in polynomial time. Since there can be at most $\lfloor n/2 \rfloor$ $\CPC_i$'s, and from \cref{rem:X_CPCi are independent}, the reduced \GB basis of $\sum_i I_{\CPC_i}$ is $\bigcup_i \mathcal{G}_i$ and can be calculated in polynomial time.
\end{proof}
We will need to find the \GB basis of $I_{\CPC_i}$ later in the proof of \cref{thm:majority_maintheorem}. We show in the \cref{sec:GB of entire ideal} as to how we can compute the rest of the polynomials in the \GB basis of $\I_\Cc$ by using the relations $R_i$, the sets $S_j$, the bijections $\sigma_{kl}$ and polynomials that define complete and two-fan constraints. 
\begin{algorithm}
\DontPrintSemicolon
\SetKwInput{KwInput}{Inputs}              
\SetKwInput{KwOutput}{Outputs} 
\SetKwInput{KwInit}{Initialization} 
  \KwInput{Set of permutation constraints $\mathcal{P}$, index set $J$, Chained permutation constraint $\CPC_i$ for every $i\in J$.}
  \KwOutput{Index set $J$, Chained permutation constraint $\CPC_i$ for every $i\in J$.}
  \While{$\mathcal{P}\neq \emptyset$}{
  Select $R_{pq}(x_p,x_q)$ from $\mathcal{P}$.\;
  \If{$\{x_p,x_q\}\not\subseteq \cup_j X_j$}{\label{line:beginInitNewCPC}
  Let $i\in [n]\setminus J$. $J:=J\cup \{i\}$.\tcc*{Create $\CPC_i$.}
  $X_{i}:=\{x_p,x_q\}$,
  $S_p:=D_{pq}$ and $S_q:=D'_{pq}$.\;
  Let $\sigma_{pq}:S_p\to S_q$ where $\sigma_{pq}:=\pi_{pq}$.\;\label{line:endInitNewCPC}
  }
  \If{$x_p\in X_i$ and $x_q\notin \cup_j X_j$}{
  $S_p := S_p\cap D_{pq}$.\label{line:if_Sp}\tcc*{Expand $\CPC_i$.}
  $S_j:=\{\sigma_{pj}(a)\mid a\in S_p\}$ for every $x_j\in X_{i}\setminus \{x_p\}$.\;\label{line:if_Sj}
  $S_q:=\{\pi_{pq}(a)\mid a\in S_p\}$.\;\label{line:if_Sq}
  $X_{i}:= X_{i}\cup\{x_q\}$.\;\label{line:if_X_cpc}
$\sigma_{pq}:= S_p \to S_q$ where $\sigma_{pq}=\pi_{pq}$.\;\label{line:if_Sigmapq}
    $\sigma_{jq}:= S_j \to S_q$ where $\sigma_{jq}=\sigma_{pq}(\sigma_{jp})$.\;\label{line:if_Sigmajq}
    }
    \If{$x_p\in X_i$, $x_q\in X_j$ and $i\neq j$}{
    $S_p := \{a\in S_p \cap D_{pq} \mid \pi_{pq}(a)\in S_q \}$.\label{lin:Spdef}\tcc*{Combine $\CPC_i$ and $\CPC_j$.}
  $S_k:=\{\sigma_{pk}(a)\mid a\in S_p\}$ for every $x_k\in X_{i}\setminus \{x_p\}$.\;\label{lin:Sk_start}
  $S_q:=\{\pi_{pq}(a)\mid a\in S_p\}$.\;
  $S_k:=\{\sigma_{qk}(a)\mid a\in S_q\}$ for every $x_k\in X_{j}\setminus \{x_q\}$.\;\label{lin:Sk_end}
  $\sigma_{pq}:= S_p\to S_q$ where $\sigma_{pq}=\pi_{pq}$ where $a\in S_p$.\;\label{lin:sigmakl_start}
  $\sigma_{kl}:= S_k \to S_l$ $\forall x_k\in X_i\setminus\{x_p\},x_l\in X_j\setminus\{x_q\}$ where $\sigma_{kl}=\sigma_{ql}(\sigma_{pq}(\sigma_{kp}))$.\;\label{lin:sigmakl_end}
  $X_i:=X_i\cup X_j$. $J:=J\setminus \{j\}$.\label{lin:combineX}
    }
    \If{$x_p,x_q\in X_i$}{
    $S_p := \{a\in S_p \cap D_{pq} \mid \sigma_{pq}(a)=\pi_{pq}(a) \}$.\label{line:else_Sp}\tcc*{Update $\CPC_i$.}
  $S_j:=\{\sigma_{pj}(a)\mid a\in S_p\}$ for every $x_j\in X_{i}\setminus \{x_p\}$.\;\label{line:else_Sj}
    }
    Delete $R_{pq}(x_p,x_q)$ from $\mathcal{P}$.\;
  }
  For each $i\in J$, we define
  $\CPC_i:=R_i(X_{i}=\{x_{i_1},x_{i_2},\ldots,x_{i_r}\})$ where
  $R_i = \{(a,\sigma_{i_1i_2}(a),\ldots,\sigma_{i_1i_r}(a))\mid a\in S_{i_1}\}$.\label{line:CPCi}
\caption{Constructing chained permutation constraints}\label{alg:BuildingCPC}
\end{algorithm}

\subsection{Complete and two-fan constraints}\label{sec:completeandtwofan}
A complete constraint is of the form $R(x_i,x_j)$ where $R=D_i\times D_j$  for some $D_i,D_j \subseteq D$. The polynomials that can represent these constraints are $\prod_{a\in D_i} (x_i-a)$ and $\prod_{b\in D_j} (x_j-b)$. We call these polynomials \textit{partial domain polynomials}. If no such input explicitly exists for a variable, the domain polynomial in that variable itself is the partial domain polynomial. A two-fan constraint is of the form $R(x_i,x_j)$ where $R=\{(\{a\}\times D_j)\cup (D_i\times \{b\})\}$ for some $D_i,D_j \subseteq D$ with $a\in D_i,b\in D_j$. This constraint can be represented by the set of polynomials  $\{(x_i-a)(x_j-b),\prod_{c\in D_i} (x_i-c), \prod_{d\in D_j} (x_j-d)\}$.

\begin{definition}\label{def:DFL}
 The set of polynomials $\ca{D}$, $\ca{F}$ and $\ca{L}$ is defined as follows: 
 \begin{align*}
    \mathcal{D} &= \{\Pi_{a\in A} (x_i-a)\mid i\in [n], A\subseteq D\},\\
    \mathcal{F} &= \{(x_i-a)(x_j-b)\mid i,j\in [n],i\neq j\},\\
    \mathcal{L} &=  \{x_i-\alpha_2 - (x_j-\beta_2)(\alpha_1-\alpha_2)/(\beta_1-\beta_2) \mid i,j\in [n],i\neq j \},
\end{align*}
for all $a,b,\alpha_1,\alpha_2,\beta_1,\beta_2\in D$ where $\alpha_1\neq \alpha_2$ and $\beta_1\neq \beta_2$.
\end{definition}
In other words $\ca{D}\cup\ca{F}$ is the set of all complete constraints and two-fan constraints and $\ca{L}$ is the set of polynomials in two variables that pass through two points $(\alpha_1,\beta_1),(\alpha_2,\beta_2)\in D^2$ where $\alpha_1\neq \alpha_2$ and $\beta_1\neq \beta_2$. Let $G\subset \ca{D}\cup\ca{F}$ be the set of polynomials that describes the input complete constraints and two-fan constraints of an instance of $\IMP_d(\Gamma)$. Let $I_{\CF}=\GIdeal{G}$ (combinatorial ideal for the Complete and two-Fan constraints). 
Recall that the constraints for $\nabla$-closed problems can be assumed to be binary \cite{Jeavons:1997:CPC} and are of three types: permutation constraints, complete constraints and two-fan constraints \cite{szendrei,Jeavons1994_Characterising_tract_constraints}. Then
\begin{equation*}
   \I_\Cc = \sum_{i\in J}I_{\CPC_i} + I_\CF = \sum_{i\in J}I_{\CPC_i} + \GIdeal{G}. 
\end{equation*}

\begin{lemma}\label{lem:red GB of D union F}
The reduced \GB basis of $I_\CF$ can be calculated in polynomial time and is a subset of $\ca{D}\cup\ca{F}\cup\ca{L}$.
\end{lemma}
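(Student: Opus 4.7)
The plan is to run Buchberger's algorithm starting from the generating set $G \subset \ca{D} \cup \ca{F}$, augmented if necessary with the full domain polynomials $\prod_{a \in D}(x_i - a)$ (which already lie in $\ca{D}$). The central technical claim is a closure property: given any $f, g$ in a current basis $H \subseteq \ca{D} \cup \ca{F} \cup \ca{L}$, the residue $\reduce{S(f,g)}{H}$ is either zero or a scalar multiple of a polynomial in $\ca{D} \cup \ca{F} \cup \ca{L}$. Once this closure is established, termination in polynomial time is immediate: since $|D| = O(1)$, one has $|\ca{D} \cup \ca{F} \cup \ca{L}| = O(n^2)$, so the basis grows at most polynomially many times; every polynomial handled has degree at most $|D|$, so each $S$-polynomial and subsequent reduction is polynomial in $n$; extracting the reduced \GB basis from the resulting basis is a standard polynomial-time post-processing step.

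The closure claim is verified by a case analysis on the types of the pair $(f,g)$. When the variable supports of $f$ and $g$ are disjoint, \cref{th:rel_prime} gives $\reduce{S(f,g)}{\{f,g\}} = 0$. When $f, g \in \ca{D}$ share the variable $x_i$, the $S$-polynomial becomes a univariate computation in $\mathbb{Q}[x_i]$, and the reduction yields the partial domain polynomial on $A \cap B$, which is in $\ca{D}$. For $f = \prod_{a \in A}(x_i - a) \in \ca{D}$ paired with $g = (x_i - a')(x_j - b') \in \ca{F}$, if $a' \in A$ then the reduction collapses to an element of $\ca{F}$ or $\ca{D}$ on a smaller subset; if $a' \notin A$, then $(x_i - a')$ is a unit in $\mathbb{Q}[x_i]/\langle f \rangle$, so a B\'ezout-type argument produces $x_j - b' \in \ca{D}$.

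The most interesting case, which explains the appearance of $\ca{L}$, is a pair of $\ca{F}$-polynomials $(x_i - a_1)(x_j - b_1)$ and $(x_i - a_2)(x_j - b_2)$ on the same two variables with $a_1 \neq a_2$ and $b_1 \neq b_2$. Direct expansion shows the $S$-polynomial equals $(b_2 - b_1) x_i + (a_2 - a_1) x_j + (a_1 b_1 - a_2 b_2)$, whose zero set is exactly the line through the two common solutions $(a_1, b_2)$ and $(a_2, b_1)$ of the two two-fan constraints; after normalization this is an element of $\ca{L}$. Degenerate sub-cases (where $a_1 = a_2$ or $b_1 = b_2$) collapse to elements of $\ca{D}$. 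Pairs of $\ca{F}$-polynomials sharing only one variable, and pairs involving $\ca{L}$-elements, admit similar direct treatment: the $S$-polynomial is again linear or bilinear, and inspection of its root locus against the geometry of complete and two-fan constraints forces membership in $\ca{D} \cup \ca{F} \cup \ca{L}$.

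The main obstacle I expect is the exhaustive verification for pairs involving $\ca{L}$-polynomials once they enter the basis, because $S$-polynomials of an $\ca{L}$-element with $\ca{F}$- or other $\ca{L}$-elements can produce linear and bilinear objects in several variables whose classification requires careful case work on how they interact with the partial domain polynomials already present. This reduces to a finite geometric check, and the structural restriction imposed by $\nabla$-closure ensures that the resulting objects do not escape $\ca{D} \cup \ca{F} \cup \ca{L}$. Once closure is established, the polynomial size of this family together with the bounded degree of its elements yield the claimed polynomial-time algorithm.
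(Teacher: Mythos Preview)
Your overall strategy---run Buchberger on the generators and show a closure property under $\ca{D}\cup\ca{F}\cup\ca{L}$, then invoke the polynomial bound $|\ca{D}\cup\ca{F}\cup\ca{L}|=O(n^2)$---is exactly the paper's approach, and your treatment of the $\ca{D}$--$\ca{D}$, $\ca{D}$--$\ca{F}$, and same-variable $\ca{F}$--$\ca{F}$ cases matches the paper's case analysis (the latter is precisely where the $\ca{L}$-polynomials first appear).

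The gap is in how you dispatch pairs involving an $\ca{L}$-element once one has been added. You flag this as ``the main obstacle'' and then appeal to ``a finite geometric check'' and to ``$\nabla$-closure''. Neither is a proof. In particular, $\nabla$-closure of $\Gamma$ plays no role in this lemma: $I_\CF$ is defined purely as the ideal generated by a subset of $\ca{D}\cup\ca{F}$, and the statement is an algebraic fact about such ideals. So you cannot invoke the polymorphism here. And a naive $S$-polynomial of an $\ca{L}$-element $h$ (with $\LM(h)=x_i$) against, say, an $\ca{F}$-element $(x_i-a)(x_k-b)$ does not obviously land back in $\ca{D}\cup\ca{F}\cup\ca{L}$ without further argument.

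The paper avoids this case analysis entirely with a simple invariant: whenever a polynomial $h\in\ca{L}$ enters $G$, immediately reduce every other element of $G$ by $h$. Since $\LM(h)$ is a single variable $x_i$, this eliminates $x_i$ from the rest of $G$; thereafter, for any $f\in G\setminus\{h\}$ the leading monomials of $f$ and $h$ are relatively prime, so $\reduce{S(f,h)}{\{f,h\}}=0$ by \cref{th:rel_prime}. This single observation disposes of all pairs involving $\ca{L}$ at once, and is the missing ingredient in your argument.
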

\begin{proof}
The proof follows by a case analysis. Some cases have already been considered in \cite[Lemma 5.16]{bulatov2020complexity} and in \cite[Section 3, Case 2]{bharathi_et_al:MFCS2020}, and in the following we will refer to those references for brevity of discussion.

For any pair $f,g\in G$, we show that there are polynomials $H \subset \GIdeal{G}$ such that $H\subset \ca{D}\cup\ca{F}\cup\ca{L}$ and $\reduce{S(f,g)}{H}=0$. We then include these polynomials in $G$, i.e.~$G:=G\cup H$. We dismiss the cases where $\LM(f)$ and $\LM(g)$ are relatively prime (see \cref{th:rel_prime}). 

The cases already considered in \cite[Lemma 5.16]{bulatov2020complexity} are when:
\begin{itemize}
    \item $f,g\in\ca{F}$ where $f = (x_i-a)(x_j-b)$, $g = (x_i-c)(x_k -d)$ for $a=c$ and $a\neq c$,
    \item $f\in\ca{D},g\in\ca{F}$ where $f = \Pi_{a\in D_i} (x_i-a)$, $g = (x_i -c)(x_j - b)$ and $c\in D_i$.
\end{itemize}

We now consider the remaining cases. When a polynomial $h\in\ca{L}$ is added to $G$, we reduce all polynomials in $G$ by $h$ and therefore no other polynomial in $G$ contains $\LM(h)$.
\begin{enumerate}
    \item Suppose $f\in\ca{D},g\in\ca{F}$ where $f=\Pi_{a\in D_i}(x_i-a)$, $g=(x_i-c)(x_j-b)$ and $c\notin D_i$. 
    Every common zero of $f$ and $g$ has to satisfy $x_j = b$ (since $x_i=c$ is not feasible).
    This implies $x_j-b \in \GIdeal{G}$. Since this polynomial can divide $g$, we replace $g$ with $x_j-b$ instead and continue with the Buchberger's algorithm. With this new $g$, $\reduce{S(f,g)}{\{f,g\}}=0$ since the leading monomials are relatively prime (see \cref{th:rel_prime}). 
    \item Suppose $f,g\in\ca{F}$, where $f=(x_i-a)(x_j-b)$ and $g=(x_i-c)(x_j-d)$. If $a\neq c$ and $b\neq d$, then $R=\{(a,d),(c,b)\}$ is the set of solutions to both $f$ and $g$.
    The constraint $R(x_i,x_j)$ is a new permutation constraint and is added to $\ca{P}$. We run \cref{alg:BuildingCPC} again to update CPC's or create one.
    A \GB basis for the ideal $\GIdeal{f,g}$ is $H=\{(x_i-a)-\frac{c-a}{b-d}(x_j-d),(x_i-a)(x_i-c),(x_j-b)(x_j-d)\}$ (see \cite[Section 3, Case 2]{bharathi_et_al:MFCS2020}) and therefore $H\subset \ca{D}\cup\ca{F}\cup\ca{L}$. 
    The $S$-polynomial $S(f,g)=(d-b)x_i+(c-a)x_j+ab-cd$, hence $\reduce{S(f,g)}{H}=0$. Note that after running \cref{alg:BuildingCPC}, $x_i,x_j\in X_k$ for some $k\in J$. So, the linear polynomial in $H$ belongs to $I_{\CPC_k}$.
    If $a=c$ and $b\neq d$, then the S-polynomial is $S(f,g)=x_i(d-b)+a(b-d)=(d-b)(x_i-a)$ and $H=\{x_i-a\}$. 
    \item Suppose $f\in \ca{D}\cup\ca{F}\cup\ca{L}$ and $g\in \ca{L}$. The polynomials $f$ and $g$ are relatively prime since if $g$ were to belong to $G$, then no other polynomial contains $\LM(g)$. This implies $\reduce{S(f,g)}{\{f,g\}}=0$.
\end{enumerate}

Hence, the $S$-polynomial for every two polynomials in $G$ is such that there are polynomials in $I_\CF$ that belong to $\ca{D}\cup\ca{F}\cup\ca{L}$ that reduce the $S$-polynomial to zero. In fact, it is not difficult to see that the reduced \GB basis is also a subset of $\ca{D}\cup\ca{F}\cup\ca{L}$. Since $|\ca{D}\cup\ca{F}\cup\ca{L}|=O(n^2)$, the reduced \GB basis of $I_\CF$ can be calculated in polynomial time.
\end{proof}
\begin{algorithm}
\DontPrintSemicolon
\SetKwInput{KwInput}{Input}              
\SetKwInput{KwOutput}{Output} 
\SetKwInput{KwInit}{Initialization} 
 
  \KwInput{$G,\CPC_i$.}
  \KwOutput{\GB basis of $\I_\Cc$.}
  Compute and replace $G$ by the reduced \GB basis of $I_\CF$.\;\label{lin:GB of AUB} 
  \For{every $g=\Pi_{a\in D_p}(x_p-a)\in G\cap\ca{D}$
  }{
  \If{$D_p\neq S_p$}{
  $S_p:= S_p\cap D_p$. Suppose $x_p\in X_i$.\;
  $S_k:= \{\sigma_{pk}(a)\mid a\in S_p \}$ for every $x_k\in X_i\setminus\{x_p\}$.\;
  Replace $g$ by $\Pi_{a\in S_p}(x_p-a)$ in $G$. 
  Go to line 1.  
  }
  }
  Let $C= G\cap\ca{F}$.\;
  \While{$C\neq\emptyset$}{
  Choose $g=(x_p-a)(x_q-b)\in C$. 
  Suppose $x_p\in X_i$.\;
  \If{$a\notin S_p$}{
  Add $x_q-b$ to $G$ if $a\notin S_q$ else add $x_p-a$ to $G$. Go to line 1.  
  }
  \tcc{At this point $a\in S_p$ and $b\in S_q$.}
  
  \If{$x_q\in X_j$ for some $i\neq j$}{
  \If{$b\notin S_q$}{
  Add $x_p-a$ to $G$. Go to line 1.  
  }
  \tcc{At this point $a\in S_p$ and $b\in S_q$.}
  Let $B:= \{(x_k-\sigma_{pk}(a))(x_l-\sigma_{ql}(b))\mid x_k\in X_i,x_l\in X_j\}\setminus\{g\}$.
  
  \If{$\exists h\in B$ such that $\reduce{h}{G}\neq 0$}{
  $G:=G\cup B$. Go to line 1.  
  }
  }
  \If{$x_q\notin \cup_{j\in J}X_j$}{
  Let $B:= \{(x_k-\sigma_{pk}(a))(x_q-b)\mid x_k\in X_i\}$.\;
  
  \If{$\exists h\in B$ such that $\reduce{h}{G}\neq 0$}{
  $G:=G\cup B$. Go to line 1.  
  }
  }
  Delete $g$ from $C$.
  }
  Calculate $\ca{G}_i$ for every $i$.\;
  A \GB basis of $\I_\Cc$ is $(\cup_i \ca{G}_i) \cup G$.
\caption{Calculating \GB basis}\label{alg:CalculatingGB}
\end{algorithm}

\subsection{Computing a \GB basis}\label{sec:GB of entire ideal}
\begin{theorem}\label{thm:majority_maintheorem}
Let $\Gamma$ be a constraint language over a finite domain that is closed under the dual discriminator polymorphism. Consider any instance $\mathcal{C}$ of $\CSP(\Gamma)$. 
A \GB basis for the combinatorial ideal $\I_\Cc$ can be calculated in polynomial time. 
\end{theorem}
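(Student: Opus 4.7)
The plan is to show that Algorithm 3 (\cref{alg:CalculatingGB}) correctly computes a \GB basis of $\I_\Cc = \sum_{i\in J} I_{\CPC_i} + I_{\CF}$ in polynomial time, by combining the two partial bases obtained in \cref{lem:GB of I_CPC_i} and \cref{lem:red GB of D union F} and showing that the only outstanding $S$-polynomial computations to verify Buchberger's criterion either already reduce to zero or add a polynomial from a fixed polynomially-bounded pool of candidates. Conceptually, the bases $\ca{G}_i$ produced for each $I_{\CPC_i}$ depend on the solution sets $S_a$ and bijections $\sigma_{ab}$ that were fixed after running \cref{alg:BuildingCPC}, while the basis of $I_{\CF}$ may contain partial domain polynomials $\prod_{a\in D_p}(x_p-a)$ and two-fan polynomials $(x_p-a)(x_q-b)$ that further restrict values of variables already appearing in some $\CPC_i$. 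The interaction between these two families is the only nontrivial source of new basis elements.

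First I would start the algorithm from the reduced \GB basis $G$ of $I_{\CF}$ (computable by \cref{lem:red GB of D union F}) together with a symbolic description of the CPCs (without yet expanding $\ca{G}_i$). I would then process each $g\in G\cap \ca{D}$: if $g=\prod_{a\in D_p}(x_p-a)$ and $D_p\neq S_p$, I shrink $S_p$ to $S_p\cap D_p$ and propagate via the bijections $\sigma_{pk}$ to all $x_k\in X_i$ containing $x_p$, replacing $g$ by the stronger partial domain polynomial over $S_p$; since $|S_p|$ strictly decreases, this branch can only be triggered $O(n|D|)$ times. Next I process each two-fan polynomial $g=(x_p-a)(x_q-b)\in G\cap\ca{F}$: if $a\notin S_p$ or $b\notin S_q$ we can simplify $g$ to a linear polynomial (one of the factors must vanish), so $g$ is replaced by a partial domain polynomial and we recurse. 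If both $a\in S_p$ and $b\in S_q$ and $x_p,x_q$ lie in CPCs $X_i,X_j$, I propagate $g$ to the equivalent family $B=\{(x_k-\sigma_{pk}(a))(x_l-\sigma_{ql}(b))\mid x_k\in X_i,\;x_l\in X_j\}$ (or the one-sided analog if $x_q$ is free) and add those $h\in B$ with $\reduce{h}{G}\neq 0$ to $G$; correctness here comes from the fact that the $\sigma$'s are bijections on the current solution sets, so the added polynomials vanish on $\Variety{\I_\Cc}$ and hence lie in $\I_\Cc$ by radicality.

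After this stabilizes, I compute each $\ca{G}_i$ via \cref{lem:GB of I_CPC_i} and take the union with the final $G$. To show this is a \GB basis I would apply Buchberger's criterion: $S$-polynomials within $\ca{G}_i$ reduce to zero by \cref{lem:GB of I_CPC_i}; $S$-polynomials within $G$ reduce to zero by \cref{lem:red GB of D union F} applied to the updated $G$; $S$-polynomials between $g\in G$ and $h\in\ca{G}_i$ either have relatively prime leading monomials (and hence reduce to zero by \cref{th:rel_prime}) or else the interaction of their support must already have been propagated by the loops above, so the reduction follows from the explicit updates. The bound on $\ca{G}_i$ comes from \cref{lem:GB of I_CPC_i}, giving polynomials of degree at most $|D|!$.

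The main obstacle is showing polynomial termination. Each outer restart of the algorithm is caused either by a strict shrinking of some $S_p$ (bounded by $\sum_p |S_p|\le n|D|$) or by the addition of a new polynomial to $G$; since every such polynomial lies in $\ca{D}\cup\ca{F}\cup\ca{L}$ by the closure property established in \cref{lem:red GB of D union F}, and $|\ca{D}\cup\ca{F}\cup\ca{L}|=O(n^2|D|^{O(1)})$, the total number of additions is polynomial. The inner cost of each restart is dominated by recomputing the reduced \GB basis of $I_{\CF}$, itself polynomial, and by the $O(|X_i|\cdot|X_j|)=O(n^2)$ propagation step. Combining these bounds yields the promised polynomial running time, and the degree bound of $|D|!$ on polynomials in the resulting basis follows from the fact that elements of $\ca{G}_i$ have degree at most $|D|!$ while elements of $G$ have degree at most $|D|$.
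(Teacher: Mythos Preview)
Your proposal follows the paper's approach closely: compute the reduced \GB basis $G$ of $I_{\CF}$, propagate partial-domain and two-fan information through the bijections $\sigma_{pk}$ of the CPCs, and then verify Buchberger's criterion on $(\cup_i\ca{G}_i)\cup G$. Your termination argument via the polynomially bounded pool $\ca{D}\cup\ca{F}\cup\ca{L}$ is also the right one.

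There is, however, a genuine gap at the heart of the argument. When $f\in\ca{G}_i$ and $g=(x_p-a)(x_q-b)\in G$ with $x_p\in X_i$, $x_q\in X_j$, $a\in S_p$, $b\in S_q$, you add the propagated family $B=\{(x_k-\sigma_{pk}(a))(x_l-\sigma_{ql}(b))\}$ and then assert that ``the reduction follows from the explicit updates.'' But you have only argued that the elements of $B$ lie in $\I_\Cc$ (via radicality); you have not shown that $\reduce{S(f,g)}{\ca{G}_i\cup\ca{G}_j\cup G}=0$ once $B\subseteq G$. This is not automatic: the $S$-polynomial has degree possibly larger than $2$, and nothing in your outline explains why the remainder vanishes. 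The paper's proof supplies the missing mechanism: after adding $B$, any remainder $u$ has no monomial containing both an $X_i$-variable and an $X_j$-variable, so $u=u_1(X_i)+u_2(X_j)$; then one substitutes the two distinguished solution families (fixing $x_k=\sigma_{pk}(a)$ on the $X_i$ side while letting $X_j$ range over $R_j$, and symmetrically) to force $u_1$ and $u_2$ into $\GIdeal{\ca{G}_i}$ and $\GIdeal{\ca{G}_j}$ respectively, whence $u=0$. Without this variable-separation plus evaluation argument, your claim that propagation suffices is unjustified.

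A minor omission: you do not treat the case $g\in\ca{L}$. The paper observes that such $g$ encodes a permutation constraint already absorbed into some $\CPC_i$, so $g\in\GIdeal{\ca{G}_i}$ and $\reduce{S(f,g)}{\ca{G}_i}=0$ directly.
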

\begin{proof}
Let $G$ be the reduced \GB basis of $I_\CF$. Then
\begin{equation*}
   \I_\Cc = \sum_{i\in J}I_{\CPC_i} + I_\CF = \sum_{i\in J}\GIdeal{\ca{G}_i} + \GIdeal{G}, 
\end{equation*}
where $\ca{G}_i$ is the reduced \GB basis of $I_{\CPC_i}$ (see \cref{lem:GB of I_CPC_i}). For two polynomials $f,g\in (\cup_i \ca{G}_i)\cup G$, we see what the reduced $S$-polynomial can imply. The straightforward cases are when
\begin{itemize}
    \item $f,g\in\ca{G}_i$: here $\reduce{S(f,g)}{\ca{G}_i}=0 $ since $\ca{G}_i$ is the reduced \GB basis of $I_{\CPC_i}$,
    
    \item $f\in\ca{G}_i,g\in\ca{G}_j$ where $i\neq j$: as $f$ and $g$ don't share any variable in common (see \cref{rem:X_CPCi are independent}), the leading monomials are relatively prime (see \cref{th:rel_prime}), hence $\reduce{S(f,g)}{\{f,g\}}=0$,
    
    \item $f,g\in \ca{D}\cup\ca{F}\cup\ca{L}$: here $\reduce{S(f,g)}{G}=0$ because of \cref{lem:red GB of D union F}.
\end{itemize}
The only cases to examine are when $f\in\ca{G}_i$ and $g\in G \subset \ca{D}\cup\ca{F}\cap\ca{L}$. Suppose $\LM(f)$ and $\LM(g)$ contain $x_p$ and $x_p\in X_i$.

\textbf{Case 1.} Suppose $g\in \ca{D}$ i.e., $g=\Pi_{a\in D_p} (x_p-a)$. If $D_p=S_p$, then $g\in \GIdeal{\ca{G}_i}$ and $\reduce{S(f,g)}{\ca{G}_i}=0$. If not, then we update $S_p:=S_p\cap D_p$ and $S_k:=\{\sigma_{pk}(a)\mid a\in S_p\}$ for every $x_k\in X_{i}\setminus\{x_p\}$. We replace $g$ by $\Pi_{a\in S_p} (x_p-a)$ and recalculate the reduced \GB basis of $I_\CF$. 

\textbf{Case 2.} Suppose $g\in \ca{F}$, that is, $g=(x_p-a)(x_q-b)$. If $a\notin S_p$, then $x_q=b$ is the only feasible solution to $g$. We replace $g$ by $x_q-b$ and recompute the reduced \GB basis of $I_\CF$. Considering $a\in S_p$, we now distinguish the cases based on where $x_q$ belongs. Note that $x_q\notin X_i$ else $f,g\in\GIdeal{\ca{G}_i}$. 
\begin{enumerate}
    \item $x_q\in X_{j}$ and $i\neq j$: 
    if $b\notin S_q$ then we replace $g$ by $x_p-a$ and recompute the reduced \GB basis of $I_\CF$. 
    Suppose $a\in S_p$ and $b\in S_q$. If not already present in $G$, we add polynomials $ (x_k-\sigma_{pk}(a))(x_l-\sigma_{ql}(b))$ to $G$, where $ x_k\in X_i,x_l\in X_j$. 
    We claim that $\ca{G}_i\cup\ca{G}_j\cup G$ is a \GB basis of the ideal it generates.
    A polynomial in variables $X_i\cup X_j$ that belongs to $\GIdeal{\ca{G}_i\cup\ca{G}_j\cup G}$ is such that it must have the following solutions:
    \begin{enumerate}[(i)]
        \item $x_k=\sigma_{pk}(a)$ for every $x_k\in X_i$ and variables in $X_j$ take any tuple of $R_j$, and
        \item $x_k=\sigma_{qk}(b)$ for every $x_k\in X_j$ and variables in $X_i$ take any tuple of $R_i$.
    \end{enumerate}
    Let $u=\reduce{S(f,g)}{\ca{G}_i\cup\ca{G}_j\cup G}$. We now show that $u$ is the zero polynomial. No monomial in $u$ can contain $x_kx_l$ (where $x_k\in X_{i},x_l\in X_{j}$) due to the polynomials in $G$.
    We have $u=u_1(X_{i})+u_2(X_{j})$.
    Suppose $x_p=a$ and the solutions are of type (i). Then $u_1(X_i)$ is a constant, say $c$ and $u=c + u_2(X_{j})$. Since $u\in\GIdeal{\ca{G}_i\cup\ca{G}_j\cup G}$, $u=0$ for every solution of type (i). This implies $u_2(X_j)+c=0$ for every solution in $R_j$ and hence $u_2(X_j)+c\in \GIdeal{\ca{G}_j}$ and $u_2(X_j)=-c$. Now $u=u_1(X_i)-c$. Considering solutions of type (ii), we see that $u_1(X_i)-c\in\GIdeal{\ca{G}_i}$ and hence $u_1(X_i)=c$ and thus $u=0$.
    
    \item $x_q\notin \cup_j X_j$: if not already present in $G$, we add polynomials $(x_k-\sigma_{pk}(a))(x_q-b)$ to $G$ where $x_k\in X_i$. The proof of why $\reduce{S(f,g)}{\ca{G}_i\cup G}=0$ is similar.
\end{enumerate}
\textbf{Case 3.} Suppose $g\in\ca{L}$. The polynomial $g$ represents a permutation constraint and is dealt with in the chained permutation constraint containing $x_p$, i.e.~$\CPC_i$. This implies $g\in\GIdeal{\ca{G}_i}$ (see point 2 of \cref{lem:red GB of D union F}) and $\reduce{S(f,g)}{\ca{G}_i}=0$.
\end{proof}
Clearly, this \GB basis is independent of the degree $d$ of the input polynomial. Hence, we have proof of \cref{thm:main_dualdisc,thm:corollary_dualdisc}.

\section{Conclusions}
The $\IMP_d$ tractability for combinatorial ideals has useful practical applications as it implies bounded coefficients in Sum-of-Squares proofs.
A dichotomy result between ``hard'' (NP-hard) and ``easy'' (polynomial time) $\IMP$s was achieved for the $\IMP_0$ ~\cite{Bulatov17,Zhuk17} over any finite domain nearly thirty years after that over the Boolean domain \cite{Schaefer78}. The $\IMP_d$ for $d=O(1)$ over the Boolean domain was tackled by Mastrolilli \cite{MonaldoMastrolilli2019} based on the classification of the $\IMP$ through polymorphisms, where the complexity of the $\IMP_d$ for five of six polymorphisms was solved. We solve the remaining problem, i.e.~the complexity of the $\IMP_d(\Gamma)$ when $\Gamma$ is closed under the ternary minority polymorphism. This is achieved by showing that the $d$-truncated reduced \GB basis can be computed in polynomial time, thus completing the missing link in the dichotomy result of \cite{MonaldoMastrolilli2019}. 
We also show that a proof of membership can be found in polynomial time regarding the $\IMP(\Gamma)$ for which constraints are closed under the dual discriminator polymorphism.

We believe that generalizing the dichotomy results of solvability of the IMP$_d$ for a finite domain is an interesting and challenging goal that we leave as an open problem.

As a starting point, it would be interesting to study the tractability of $\IMP_d(\Gamma)$ when $\Gamma$ is a finite set of relations over $\{0, 1, 2\}$ closed under any majority. There is actually only one majority polymorphism in the Boolean domain compared to $3^6$ majority polymorphisms in the ternary domain. The suggestion would be to start with a majority operation $m$ that satisfies the identities $m(0, 1, 2) = m(1, 2, 0) = m(2, 0, 1) = 1$ and $m(0, 2, 1) = m(2, 1, 0) = m(1, 0, 2) = 2$.

Considering working more systematically on the $3$-element case, a complete list of polymorphisms that need to be looked at, that is, if we prove that for each of them the IMP is tractable, we have a dichotomy in the 3-element case, can be find in \cite[Section 4.4.1]{brady2022notes}.
The list provided in \cite[Section 4.4.1]{brady2022notes} by Brady is the list of the (up to term-equivalence and 1 permutations of the domain) 24 minimal Taylor clones on a 3-element set.
Brady’s classification is part of a more generic theory of minimal Taylor clones recently developed over any finite domain (see \cite{barto24}).


\section*{Acknowledgements}
This research is supported by the Swiss National Science Foundation project 200020-169022 ``Lift and Project Methods for Machine Scheduling Through Theory and Experiments'' and 200021-207429 ``Ideal Membership Problems and the Bit Complexity of Sum of Squares Proofs''.

\bibliography{references}

\appendix

\section{Boolean Minority}




\subsection{An example}\label{sec:example}
We provide a simple example in \cref{tab:example} where we convert the reduced \GB basis in \lex order of a combinatorial ideal to one in \grlex order. Consider the problem formulated by the following (mod 2) equations: $x_1 \oplus x_3 \oplus x_4 = 0$ and $x_2 \oplus x_3 \oplus x_5 \oplus 1 = 0$. 

Note that $f_1 = x_3 \oplus x_4$, $f_2 = x_3 \oplus x_5 \oplus 1$, $f_3 = x_3$, $f_4 = x_4$ and $f_5 = x_5$. The reduced \GB basis in the \lex order is $G=G_1=\{x_1-M(f_1), x_2 - M(f_2)$, $x_3^2-x_3$, $x_4^2-x_4$, $x_5^2-x_5\}$. We start with $G_2=\LM(G_2)=\emptyset$, $B(G_2)=A=\{1\}$ (so $b_1=a_1=1$) and $q=x_5$. For the problem of $d=2$, we have

\begin{equation*}
    Q = \{x_5,x_4,x_3,x_2,x_1,x_5^2, x_5x_4, x_4^2, x_5x_3, x_4x_3, x_3^2, x_5x_2, \ldots, x_3x_1, x_2x_1, x_1^2\}.
\end{equation*}

We start with $q=x_5$ and since ${q}|_{G_1} = x_5$ and $x_5$ does not appear as the longest Boolean term of any element of $A$, $a_2=x_5$ is added to $A$ and $b_2=x_5$ is added to $B(G_2)$.
Then, $x_5$ is deleted from $Q$ and $x_4$ is the new $q$.
The iterations are similar for $q=x_4$ and $q=x_3$, so we have $b_3=a_3=x_4$ and $b_4=a_4=x_3$ and $x_4,x_3$ are deleted from $Q$. When $q=x_2$, we have ${q}|_{G_1} = f_2 = (x_3\oplus x_5 \oplus 1)$, and since the Boolean term does not appear in any $a\in A$, we add $a_5=(x_3\oplus x_5 \oplus 1)$ to $A$  and $b_5=x_2$ to $B(G_2)$.
For similar reasons, when $q=x_1$, we add $a_6= (x_3\oplus x_4) $ to $A$ and $b_6=x_1$ to $B(G_2)$.
\begin{table}[ht]
    \centering
    \begin{tabular}{c|c|c|c|c}
    \#  &  $q$ & $B(G_2)$ & $A$ & $G_2$\\
    \hline
    0 &  -  & 1 & $1$ & $\emptyset$ \\
    1 & $x_5$ & $x_5$ & $x_5$ & - \\
    2 & $x_4$ & $x_4$ & $x_4$ & -\\
    3\label{seehere} & $x_3$ & $x_3$ & $x_3$ & -\\
    4 & $x_2$ & $x_2$ & $x_3 \oplus x_5 \oplus 1$ & -\\
    5 & $x_1$ & $x_1$ & $x_3 \oplus x_4 $ & -\\
    6 & $x_5^2$ & - & - & $x_5^2-x_5$\\
    7 & $x_4x_5$ & $x_4x_5$ & $\frac{1}{2}[\reduce{x_4}{G_1}+\reduce{x_5}{G_1}$ & -\\
     &  &  & $-(x_4 \oplus x_5)]$ & \\
    8 & $x_4^2$ & - & - & $x_4^2-x_4$\\
    9 & $x_3x_5$ & - & - & $x_3x_5-\frac{1}{2}[x_2+x_3+x_5-1]$\\
    10 & $x_3x_4$ & - & - & $x_3x_4 - \frac{1}{2}[-x_1+x_3+x_4]$\\
    11 & $x_3^2$ & - & - & $x_3^2-x_3$\\
    12 & $x_2x_5$ & -  & - & $x_2x_5 - \frac{1}{2}[x_2 + x_3 + x_5 - 1]$\\
    13 & $x_2x_4$ & $x_2x_4$ & $\frac{1}{2}[{x_2}|_{G_1} + {x_4}|_{G_1}$ & -\\
     &  &  & $ - (x_3\oplus x_4 \oplus x_5 \oplus 1) ]$ & \\
    14 & $x_2x_3$ & - & - & $x_2x_3-\frac{1}{2}[x_2+x_3+x_5-1]$\\
    15 & $x_2^2$ & - & - & $x_2^2-x_2$\\
    16 & $x_1x_5$ & - & - & $x_1x_5+x_2x_4-\frac{1}{2}[x_1+x_2+x_4+x_5-1]$\\
    17 & $x_1x_4$ & - & - & $x_1x_4-\frac{1}{2}[x_1-x_3+x_4]$\\
    18 & $x_1x_3$ & - & - & $x_1x_3-\frac{1}{2}[x_1+x_3-x_4]$\\
    19 & $x_1x_2$ & - & - & $x_1x_2+x_4x_5-\frac{1}{2}[x_1+x_2+x_4+x_5-1]$\\
    20 & $x_1^2$ & - & - & $x_1^2-x_1$\\
\end{tabular}
    \caption{Example}
    \label{tab:example}
\end{table}
After the 5-th iteration (see \cref{tab:example}) is complete, we only have degree-two monomials in $Q$. When $q=x_5^2$ and ${q}|_{G_1} = \frac{1}{2}(x_5+x_5-0)=x_5$. Since $a_2=x_5$, $x_5$ appears as a Boolean term in $a_2$. Since the longest Boolean term appears already in $A$, ${q}|_{G_1}$ must be a linear combination of existing $\reduce{b_i}{G_1}$'s. That is to say,

\begin{equation*}
    {x_5^2}|_{G_1}=c_2 = \reduce{b_2}{G_1}={x_5}|_{G_1} \implies \reduce{x_5^2}{G_1}=\reduce{x_5}{G_1},
\end{equation*}

so the polynomial $x_5^2-x_5$ is added to $G_2$. Hence $x_5^2$ is added to $\LM(G_2)$ and $x_5^2$ is deleted from $Q$. When $q=x_5x_4$, 

\begin{equation*}
    {x_5x_4}|_{G_1}= f_4\cdot f_5 =\frac{1}{2}[x_4+x_5-(x_4 \oplus x_5)]=\frac{1}{2}[\reduce{x_4}{G_1}+\reduce{x_5}{G_1}-(x_4 \oplus x_5)].
\end{equation*}

The longest Boolean term of ${q}|_{G_1}$ is $(x_4\oplus x_5)$ which does not appear in any $a\in A$, so
$a_7 = 1/2[\reduce{x_4}{G_1}+\reduce{x_5}{G_1}-(x_4 \oplus x_5)]$ is added to $A$ and $b_7 = x_5x_4$ is added to $B(G_2)$.
When $q=x_4^2$ (this is similar to the case when $q=x_5^2$), we see that $x_4^2-x_4$ is added to $G_2$ and $x_4^2$ to $\LM(G_2)$. When  $q=x_3x_5$,

\begin{equation*}
    {x_3x_5}|_{G_1}= f_3\cdot f_5=\frac{1}{2}[x_3+x_5-(x_3 \oplus x_5)].
\end{equation*}

Note that $(x_3\oplus x_5 \oplus 1)$ appears in $a_5\in C$. We use the fact that $(f\oplus 1) = 1 - f$ (see \cref{alg:conversion}), and we have

\begin{align*}
    {x_3x_5}|_{G_1}&= \frac{1}{2}[x_3+x_5-(x_3 \oplus x_5)] =\frac{1}{2}[\reduce{x_3}{G_1}+\reduce{x_5}{G_1}-(1 - (x_3 \oplus x_5\oplus 1))]\\
    &= \frac{1}{2}[\reduce{x_2}{G_1}+\reduce{x_3}{G_1}+\reduce{x_5}{G_1}-\reduce{1}{G_1}]
\end{align*}

and thus $x_3x_5-\frac{1}{2}[x_2+x_3+x_5-1]$ is added to $G_2$ and $x_3x_5$ to $\LM(G_2)$. The rest of the polynomials in $B(G_2),G_2,A$ are as shown in \cref{tab:example}. After the 20-th iteration, $Q$ becomes empty and algorithm halts. This gives the 2-truncated reduced \GB basis $G_2$ of the combinatorial ideal. Note that this is in fact the reduced \GB basis in its entirety for this particular example.

\end{document}